    \newcommand{\N}{\mathbb{N}}
    \newcommand{\E}{\mathbb{E}}
    \newcommand{\R}{\mathbb{R}}
    \newcommand{\norm}[1]{\left\lVert #1 \right\rVert}
\newtheorem{theorem}{Theorem}[section]
  \newtheorem{lemma}[theorem]{Lemma}
  \newtheorem{corollary}[theorem]{Corollary}
  \newtheorem{proposition}[theorem]{Proposition}
  \theoremstyle{definition}
  \newtheorem*{remark}{Remark}
	\renewenvironment{proof}{\textbf{\emph{Proof.}}}{\qed}
	\renewenvironment{remark}{\textbf{\emph{Remark.}}}
\def\blfootnote{\xdef\@thefnmark{}\@footnotetext}
\begin{document}

\title{\bf Testing normality via a distributional fixed point property in the Stein characterization }


\author{S. Betsch and B. Ebner }

\date{\today}
\maketitle

\blfootnote{ {\em American Mathematical Society 2010 subject
classifications.} Primary 62G10 Secondary 60F05} 
\blfootnote{
{\em Key words and phrases} Goodness-of-fit, Normal Distribution, Stein's Method, Zero Bias Transformation}

\begin{abstract} We propose two families of tests for the classical goodness-of-fit problem to univariate normality. The new procedures are based on $L^2$-distances of the empirical zero-bias transformation to the normal distribution or the empirical distribution of the data, respectively. Weak convergence results are derived under the null hypothesis, under fixed alternatives as well as under contiguous alternatives. Empirical critical values are provided and a comparative finite-sample power study shows the competitiveness to classical procedures.
\end{abstract}

\section{Introduction}
\label{Intro}
Testing normality is commonly known as the most used and discussed goodness-of-fit technique, motivated by the model assumption of normality in classical models. To be specific, let $X, X_1, X_2, \dotso$ be real valued independent and identically distributed (iid.) random variables. The problem of interest is to test the hypothesis
\begin{equation}\label{H0}
	H_0: \mathbb{P}^X \in \mathcal{N}=\{ \mathcal{N}(\mu, \sigma^2) \, | \, (\mu, \sigma^2) \in \R \times (0,\infty) \}
\end{equation}
against general alternatives. So far, a great variety of goodness-of-fit tests have been proposed and research is of ongoing interest, as witnessed by the recent papers \cite{BGWX:2016,HJG:2018,VG:2015} and comparative simulation studies like \cite{RDC:2010,YS:2011}. Classical procedures in goodness-of-fit methodology as the Kolmogorov-Smirnov and the Cram\'er-von Mises test approach the testing problem by measuring the distance of the empirical distribution function to the estimated representant of $\mathcal{N}$. For a theoretical approach to the goodness-of-fit to a family of distributions, see \cite{delBarrio2000,N:1979}. Other methods are based on skewness and kurtosis, as for instance proposed in \cite{PDB:1977} (known to lead to inconsistent procedures), the empirical characteristic function, see \cite{EP:1983}, the Wasserstein distance, see \cite{delBarrio2000,DBCMRR:1999}, the sample entropy, see \cite{V:1976}, or correlation and regression tests, as the famous Shapiro-Wilk test, see \cite{SW:1965}, among others. For a survey of classical existing methods see \cite{delBarrio2000}, section 3, and \cite{H:1994}, for surveys on goodness-of-fit techniques connected to characterizations of distributions, including the normal, see \cite{MS:2013a,MS:2013b,N:2017} and for the problem of testing multivariate normality, we refer to \cite{H:2002,MM:2004}.

Another natural approach to assess the distance of the distribution of a real valued random variable $X$ to the normal distribution is to calculate the difference between $\E h(X)$ and $\E h(N)$, where $N \sim \mathcal{N}(0,1)$, over some large class of functions $h : \R \to \R$. With the class $\{e^{itx}\, | \, t \in \R \}$ leading to the characteristic functions of the distributions one heavily relies on the assumption of independence when proving limit theorems. In an attempt to give an alternative proof of the central limit theorem, Charles Stein considered a different class of test functions (see eg. \cite{Stein:1972}). Stating that $X$ has a standard normal distribution if, and only if,
\begin{equation} \label{Stein characterization}
	\E \big[ f^{\prime}(X) - X f(X) \big] = 0
\end{equation}
holds for each absolutely continuous function $f$ for which the expectation exists, it appears reasonable to regard the left hand side of ($\ref{Stein characterization}$), for a suitable function $f$, as an estimate of $\E h(X) - \E h(N)$ since both terms ought to be small whenever the distribution of $X$ is close to standard normal. In practice, solving the differential equation
\begin{equation} \label{diff equation}
	f^{\prime}(x) - x f(x) = h(x) - \E h(N)
\end{equation}
for absolutely continuous functions $h$, evaluating at $X$ and taking expectations, the problem reduces to appraising $\E [ f_h^{\prime}(X) - X f_h(X)]$, with $f_h$ being the solution of ($\ref{diff equation}$). A commonly used tool to handle these terms is the so called zero bias transformation introduced by \cite{GR:1997}. Namely, if $\E X = 0$ and $\mathbb{V}(X) = 1$, a random variable $X^*$ is said to have the $X$-zero bias distribution if
\begin{equation} \label{Def Zero Bias}
	\E \big[ f^{\prime}(X^*)\big] = \E \big[ X f(X) \big]
\end{equation}
holds for all absolutely continuous functions $f$ for which these expectations exist. The use of this distribution, if existent, lends itself easily to the purpose of distributional approximation. For instance, starting with the solution of (\ref{diff equation}), the mean value theorem gives
\begin{equation*}
	| \E h(X) - \E h(N) |
	= | \E [f_h^{\prime}(X) - f_h^{\prime}(X^*)] |
	\leq \norm{f_h^{\prime \prime}}_{\infty} \, \E | X - X^* | .
\end{equation*}
Thus, the problem reduces to bounding the derivatives of the solution $f_h$ of ($\ref{diff equation}$) and constructing $X^*$ such that $\E | X - X^* |$ is accessible. Bounds on $f_h$ and its derivatives are well-known and a comprehensive treatment as well as explicit constructions for $X^*$ may be found in \cite{CGS:2011} (for the bounds see also \cite{Stein:1986}). For an introduction to Stein's method, see \cite{CGS:2011,R:2011}. One of the main reasons Stein's method, particularly for the normal distribution, has been studied to a remarkable extent are various central limit type results, also giving convergence rates, even in dependency settings.

It seems reasonable to ask whether Stein's characterization ($\ref{Stein characterization}$) may be used to construct a goodness-of-fit statistic. Apparently, we can hardly evaluate a quantity for all absolutely continuous functions which makes the direct application of equation ($\ref{Stein characterization}$) rather complicated (cf. \cite{LLJ:2016}). Instead, we propose a test based on the zero bias distribution. To this end, we interpret ($\ref{Def Zero Bias}$) as a distributional transformation $\mathbb{P}^X \mapsto \mathbb{P}^{X^*}$ and notice that, by ($\ref{Stein characterization}$), the standard normal distribution is the unique fixed point of this transformation, see \cite{GR:1997}, Lemma 2.1 (i) or \cite{CGS:2011}.
\begin{proposition} \label{Prop existence zero bias}
	If $X$ is a centred, real valued random variable with $\mathbb{V}(X) = 1$, the $X$-zero bias distribution exists and is unique. Moreover, it is absolutely continuous with respect to the Lebesgue measure with density
	\begin{equation*}
		d^X (t) = \E [X \mathds{1}\{ X > t \}] = - \E [X \, \mathds{1}\{ X \leq t \}]
	\end{equation*}
	and distribution function
	\begin{equation*}
		F^X (t) = \E [X (X - t) \mathds{1}\{ X \leq t \}] .
	\end{equation*}	
\end{proposition}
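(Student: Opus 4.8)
The plan is to prove the statement by direct verification: I would first check that the proposed function $d^X$ is a genuine probability density, then confirm that a random variable $X^*$ with this density satisfies the defining identity (\ref{Def Zero Bias}), and finally deduce uniqueness and the formula for the distribution function. Throughout I would use that $\E X^2 = \mathbb{V}(X) + (\E X)^2 = 1$, so that $X$ is square-integrable and in particular $\E|X| < \infty$; this guarantees that all the expectations $\E[X\mathds{1}\{X > t\}]$ are finite and well-defined. The equality of the two expressions for $d^X$ is immediate, since $\E X = 0$ gives $\E[X\mathds{1}\{X > t\}] + \E[X\mathds{1}\{X \leq t\}] = \E X = 0$.

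To see that $d^X$ is a density, I would argue nonnegativity by cases: for $t \geq 0$ the integrand $X\mathds{1}\{X > t\}$ is nonnegative, while for $t < 0$ the representation $d^X(t) = -\E[X\mathds{1}\{X \leq t\}]$ is nonnegative because $X \leq t < 0$ forces $X < 0$. For $\int_{\R} d^X(t)\,dt = 1$ I would split the integral at $0$, apply Tonelli on each half, and compute $\int_0^\infty \E[X\mathds{1}\{X>t\}]\,dt = \E[(X^+)^2]$ together with $\int_{-\infty}^0 \big(-\E[X\mathds{1}\{X\leq t\}]\big)\,dt = \E[(X^-)^2]$, whose sum is $\E X^2 = 1$.

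For the defining property, let $X^*$ have density $d^X$ and let $f$ be absolutely continuous with $f(x) = f(0) + \int_0^x f'(s)\,ds$. I would compute $\E[f'(X^*)] = \int_{\R} f'(t)\,d^X(t)\,dt$ by again splitting at $0$ (using the two representations of $d^X$ on the two halves) and interchanging integration and expectation by Fubini. On $(0,\infty)$ this yields $\E[X(f(X)-f(0))\mathds{1}\{X>0\}]$ and on $(-\infty,0)$ it yields $\E[X(f(X)-f(0))\mathds{1}\{X<0\}]$; adding these and using $\E X = 0$ gives $\E[X(f(X)-f(0))] = \E[Xf(X)]$, which is (\ref{Def Zero Bias}). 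Uniqueness then follows because (\ref{Def Zero Bias}) determines $\E[g(X^*)]$ for every bounded continuous $g$ via $f(x) = \int_0^x g(s)\,ds$, and the law of $X^*$ is determined by such expectations. Finally, for the distribution function I would write $F^X(t) = \int_{-\infty}^t d^X(s)\,ds = -\E\big[X\int_{-\infty}^t \mathds{1}\{X\leq s\}\,ds\big]$, evaluate the inner integral as $(t-X)\mathds{1}\{X\leq t\}$, and obtain $F^X(t) = \E[X(X-t)\mathds{1}\{X\leq t\}]$.

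The main obstacle I anticipate is purely technical: justifying the interchange of integration and expectation. A naive application of Fubini to $\int_{\R} f'(t)\,\E[X\mathds{1}\{X>t\}]\,dt$ runs into the divergence of $\int_{-\infty}^X f'(t)\,dt$, so the key device is to split every integral at the origin and use whichever of the two representations of $d^X$ decays as $|t|\to\infty$ on that half; the square-integrability of $X$ then supplies the dominating bounds (e.g.\ $|X|(t-X)\mathds{1}\{X\leq t\} \leq |X|\,|t| + X^2$) needed for Tonelli and Fubini. Once this integrability bookkeeping is in place, the remaining computations are routine.
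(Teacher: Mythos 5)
The paper does not actually prove this proposition; it states it and refers the reader to \cite{CGS:2011} and \cite{GR:1997}, so there is no in-paper argument to compare against. Your direct-verification route is correct and is essentially the modern textbook proof (the one in \cite{CGS:2011}): the case split at the origin between the two representations $\E[X\mathds{1}\{X>t\}]$ and $-\E[X\mathds{1}\{X\leq t\}]$ is exactly the right device, it gives nonnegativity of $d^X$ for free, and your computations $\int_0^\infty d^X = \E[(X^+)^2]$, $\int_{-\infty}^0 d^X = \E[(X^-)^2]$, the Fubini evaluation of $\int f'\,d^X$ as $\E[X(f(X)-f(0))] = \E[Xf(X)]$, and the identity $\int_{-\infty}^t\mathds{1}\{X\leq s\}\,\mathrm{d}s = (t-X)\mathds{1}\{X\leq t\}$ are all sound, with $\E X^2=1$ supplying the needed dominating bounds. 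The uniqueness argument via $f(x)=\int_0^x g(s)\,\mathrm{d}s$ for bounded continuous $g$ is also the standard one. For contrast, the original proof of Goldstein and Reinert establishes existence more abstractly: $g\mapsto\E[XG(X)]$ with $G(x)=\int_0^x g$ is shown to be a positive linear functional (positivity following from the correlation inequality $\E[XG(X)]\geq\E X\,\E G(X)=0$ for nondecreasing $G$), and the Riesz representation theorem produces the measure, after which the density is identified. Your approach trades that abstraction for explicit integrability bookkeeping and yields the density and distribution function formulas directly, which is arguably better suited to this paper's purposes. The only caveat worth recording is the usual one: your Fubini step verifies the defining identity for $f$ with $\E[|X|\,|f(X)-f(0)|]<\infty$, i.e.\ under joint integrability, which is the standard reading of the qualifier ``for which these expectations exist'' in the definition.
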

A proof can be found in \cite{CGS:2011} or in the original treatment \cite{GR:1997}. In view of Proposition $\ref{Prop existence zero bias}$ and the interpretation of ($\ref{Def Zero Bias}$) as a distributional transformation, the normal distribution is characterized as follows. A random variable $X$ with distribution function $F$ and $\E X = 0$, $\mathbb{V}(X) = 1$ is standard normally distributed if, and only if,
\begin{equation*}
	F^X = F = \Phi ,
\end{equation*}
where $\Phi$ is the distribution function of the standard normal distribution. In section $\ref{distribution function tests}$ we will use a weighted $L^2$-measure of deviation between an empirical version of $F^X$ and $\Phi$ or the empirical distribution, respectively, to construct two statistics for our testing problem (\ref{H0}). Further, we will establish the consistency of our classes of tests and derive their limit distributions under fixed alternatives in section $\ref{Section under fixed alternatives}$. The limit null distributions are derived in section $\ref{limit null}$. The behaviour under contiguous alternatives is studied in section $\ref{contiguous alternatives}$ and empirical results in form of a power study are presented in section $\ref{Section emp results}$. Conclusions and outlines complete the article.

\section{The proposed test statistics}
\label{distribution function tests}
Let $X, X_1, X_2, \dotso$ be real valued iid. random variables defined on an underlying probability space $(\Omega, \mathcal{A}, \mathbb{P})$. Further, let $F$ be the distribution function of $X$ and assume that $\E [X^2] < \infty$. We intend to test the hypothesis in (\ref{H0}). To reflect the invariance of the family of normal distrbutions $\mathcal{N}$ with respect to affine transformations, the proposed statistics will only depend on the so called scaled residuals, namely $Y_{n,1}, \dots, Y_{n,n}$, $Y_{n,j} = (X_j - \overline{X}_n)/S_n$, where $\overline{X}_n = n^{-1} \sum_{k=1}^{n} X_k$ and $S_n^2 = n^{-1} \sum_{k=1}^{n}(X_k - \overline{X}_n)^2$ are the sample mean and variance, respectively. This way, the values of our statistics themselves and thus the tests based on them are invariant under affine transformation of the data. We note that $X$ has a normal distribution with some parameters $\mu$ and $\sigma^2$ if, and only if, $Y_{n,1}$ is asymptotically standard normal as $(\overline{X}_n, S_n^2)$ is a strongly consistent estimator of $(\mu, \sigma^2)$. Due to the affine invariance, we assume w.l.o.g. $\E X = 0$ and $\mathbb{V}(X) = 1$.

In view of ($\ref{Stein characterization}$), ($\ref{Def Zero Bias}$) and Proposition $\ref{Prop existence zero bias}$, we suggest the Cram\'er-von Mises type test statistics
\begin{equation} \label{teststat 1}
	G_n^{(1)} = n \int_{\R} \left( \frac{1}{n} \sum_{j=1}^{n} Y_{n,j} (Y_{n,j} - t) \mathds{1}\{ Y_{n,j} \leq t \} - \frac{1}{n} \sum_{j=1}^{n} \mathds{1}\{ Y_{n,j} \leq t \} \right)^2 \omega(t) \, \mathrm{d}t
\end{equation}
and
\begin{equation} \label{teststat 2}
	G_n^{(2)} = n \int_{\R} \left( \frac{1}{n} \sum_{j=1}^{n} Y_{n,j} (Y_{n,j} - t) \mathds{1}\{ Y_{n,j} \leq t \} - \Phi(t) \right)^2 \omega(t) \, \mathrm{d}t .
\end{equation}
Here, $n^{-1} \sum_{j=1}^{n} Y_{n,j} (Y_{n,j} - t) \mathds{1}\{ Y_{n,j} \leq t \}$ is an empirical version of the zero bias distribution function and $n^{-1} \sum_{j=1}^{n} \mathds{1}\{ Y_{n,j} \leq t \}$ is the empirical distribution function of $Y_{n,1}, \dots, Y_{n,n}$. By $\omega : \R \to \R$ we denote a positive continuous weight function satisfying
\begin{equation} \label{requirements weight function}
	\int_{\R} t^4 \, \omega(t) \, \mathrm{d}t < \infty .
\end{equation}
A test based on $G_n^{(1)}$ or $G_n^{(2)}$ rejects $H_0$ for large values of the statistic. A suggestion for the choice of the weight function and equivalent expressions for $G_n^{(1)}$ and $G_n^{(2)}$ that are suitable for computations can be found in section $\ref{Section emp results}$. \\
For the asymptotic theory we consider the Hilbert space $L^2 = L^2(\R, \mathcal{B}, \omega \, \mathrm{d} \mathcal{L}^1)$ of measurable, square integrable functions $f: \R \to \R$. We denote by
\begin{equation*}
	\norm{f}_{L^2} = \left( \int_{\R} f(t)^2 \, \omega(t) \, \mathrm{d}t \right)^{1/2},\qquad \langle f, g \rangle_{L^2}=\int_{\R} f(t)g(t) \, \omega(t) \, \mathrm{d}t
\end{equation*}
the usual $L^2$-norm as well as the usual inner product in $L^2$. Notice that the functions figuring within the norm in the definition of $G_n^{(1)}$ and $G_n^{(2)}$ are random elements of $L^2$. In the following, we denote convergence in distribution by $\stackrel{\mathcal{D}}{\longrightarrow}$ and write $o_{\mathbb{P}}(1)$ and $O_{\mathbb{P}}(1)$ for convergence to $0$ in probability and boundedness in probability, respectively. Before we present our main results for the statistics, we proof a preliminary Lemma which is frequently used in the subsequent elaborations.
\begin{lemma} \label{prelim lemma}
	For $X, X_1, X_2, \dotso$ as above, we set
	\begin{equation*}
		\widehat{F}_n^X (s) = \frac{1}{n} \sum_{j=1}^{n} \frac{X_j - \overline{X}_n}{S_n^2} \, (X_j - s) \, \mathds{1}\{X_j \leq s\},\quad s\in\mathbb{R} .
	\end{equation*}
	Then
	\begin{equation} \label{Gliv Cant for emp zero bias}
		\sup\limits_{s \, \in \, \R} \left| \widehat{F}_n^X (s) - F^X (s) \right| \longrightarrow 0
	\end{equation}
	$\mathbb{P}$-almost surely. Additionally, putting
	\begin{equation*}
		A_n(s) = \frac{1}{n} \sum_{j=1}^{n} (X_j - s) \, \mathds{1}\{X_j \leq s\} - \E \big[(X - s) \, \mathds{1}\{X \leq s\} \big] ,\quad s\in\mathbb{R},
	\end{equation*}
	we have
	\begin{equation} \label{unif almost sure conv}
		\int_{\R} \big( A_n(s) \big)^2 \omega \left(\frac{s - \overline{X}_n}{S_n}\right) \, \mathrm{d}s
		= o_{\mathbb{P}}(1) .
	\end{equation}
\end{lemma}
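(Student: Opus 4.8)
The plan is to treat the two assertions by different means: the uniform Glivenko--Cantelli statement \eqref{Gliv Cant for emp zero bias} will be reduced to a monotone convergence argument, whereas the integral statement \eqref{unif almost sure conv} will be handled by a direct decomposition that exploits the moment assumption \eqref{requirements weight function} on $\omega$. Throughout I would first record the elementary facts that $\overline{X}_n \to 0$ and $S_n^2 \to 1$ $\PP$-almost surely by the strong law of large numbers, so that $1/S_n^2 \to 1$ as well.

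For \eqref{Gliv Cant for emp zero bias} the key observation is that $\widehat{F}_n^X$ is itself a genuine, continuous, nondecreasing distribution function. Indeed, the scaled residuals $Y_{n,j} = (X_j - \overline{X}_n)/S_n$ have empirical mean $0$ and empirical variance $1$, so Proposition~\ref{Prop existence zero bias}, applied to the empirical distribution of $Y_{n,1}, \dots, Y_{n,n}$, shows that $t \mapsto \frac{1}{n}\sum_{j=1}^n Y_{n,j}(Y_{n,j} - t)\mathds{1}\{Y_{n,j} \le t\}$ is the (continuous, nondecreasing) distribution function of the corresponding zero-bias law. The substitution $X_j = \overline{X}_n + S_n Y_{n,j}$ identifies $\widehat{F}_n^X(s)$ with this function evaluated at $(s - \overline{X}_n)/S_n$, whence $\widehat{F}_n^X$ is nondecreasing and continuous; by Proposition~\ref{Prop existence zero bias} the limit $F^X$ shares these properties. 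I would then establish pointwise almost sure convergence $\widehat{F}_n^X(s) \to F^X(s)$ for each fixed $s$: expanding $(X_j - \overline{X}_n)(X_j - s) = X_j(X_j - s) - \overline{X}_n(X_j - s)$ and invoking the strong law (the summands are integrable because $\E[X^2] < \infty$) together with $\overline{X}_n \to 0$ and $S_n^2 \to 1$ yields the claim. Fixing a single almost sure event on which this holds simultaneously for all rational $s$, the classical Glivenko--Cantelli argument---pointwise convergence of monotone functions to a continuous, monotone limit forces uniform convergence---gives \eqref{Gliv Cant for emp zero bias}. The delicate point is that a naive termwise estimate of $\widehat{F}_n^X(s) - F^X(s)$ produces a factor $s$ that is unbounded in the tails; recognising $\widehat{F}_n^X$ as a bona fide distribution function is precisely what circumvents this, and I expect it to be the crux of the first part.

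For \eqref{unif almost sure conv} I would split
\begin{equation*}
A_n(s) = \left[ \frac{1}{n}\sum_{j=1}^n X_j \mathds{1}\{X_j \le s\} - \E[X\mathds{1}\{X \le s\}] \right] - s\left[ \frac{1}{n}\sum_{j=1}^n \mathds{1}\{X_j \le s\} - F(s) \right],
\end{equation*}
so that, by $(a+b)^2 \le 2a^2 + 2b^2$, the integral in \eqref{unif almost sure conv} is bounded by twice the integral of each squared bracket against $\omega((s-\overline{X}_n)/S_n)$. The ordinary Glivenko--Cantelli theorem gives $\delta_n := \sup_s | \frac{1}{n}\sum_{j=1}^n \mathds{1}\{X_j \le s\} - F(s)| \to 0$ almost surely, and the same argument applied to $g(x) = x$ (decomposed into $x^+$ and $x^-$ to recover monotonicity of the partial-sum functions) gives $\varepsilon_n := \sup_s | \frac{1}{n}\sum_{j=1}^n X_j \mathds{1}\{X_j \le s\} - \E[X\mathds{1}\{X \le s\}]| \to 0$ almost surely.

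Pulling these suprema out of the integrals, the first contribution is at most $\varepsilon_n^2 \int_{\R} \omega((s-\overline{X}_n)/S_n)\,\mathrm{d}s$ and the second at most $\delta_n^2 \int_{\R} s^2\,\omega((s-\overline{X}_n)/S_n)\,\mathrm{d}s$. After the substitution $u = (s - \overline{X}_n)/S_n$ these integrals converge to finite constants: the fourth-moment condition \eqref{requirements weight function} ensures both $\int_\R \omega < \infty$ and $\int_\R t^2 \omega < \infty$, while $\overline{X}_n \to 0$ and $S_n \to 1$ control the shifted weight. Hence both contributions tend to $0$ almost surely, which is stronger than the asserted $o_{\PP}(1)$. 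The only obstacle here is the bookkeeping of the $s$-weight in the second term, and it is handled cleanly because we integrate against the decaying weight $\omega$ rather than taking a supremum, as in the first part.
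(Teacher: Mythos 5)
Your argument is correct, and it splits naturally into a part that mirrors the paper and a part that does not. For \eqref{Gliv Cant for emp zero bias} you proceed essentially as the paper does: the crux in both cases is to recognise $\widehat{F}_n^X$ as a genuine continuous distribution function and then run the Glivenko--Cantelli/P\'olya argument on pointwise almost sure limits; the only cosmetic difference is that you obtain the distribution-function property by applying Proposition~\ref{Prop existence zero bias} to the empirical law of the scaled residuals, whereas the paper verifies it by hand, checking that the empirical density $\widehat{d}_n^X$ is nonnegative, integrates to one, and has $\widehat{F}_n^X$ as its primitive. For \eqref{unif almost sure conv} your route is genuinely different. The paper computes $\E\bigl[\int_{\R} A_n(s)^2\,\omega(s)\,\mathrm{d}s\bigr]\le n^{-1}\int_{\R}(1+s^2)\,\omega(s)\,\mathrm{d}s\to 0$ by Fubini and the variance bound, concluding $\norm{A_n}_{L^2}^2=o_{\PP}(1)$ by Markov's inequality, and then passes from the weight $\omega(s)$ to $\omega((s-\overline{X}_n)/S_n)$ via the uniform ratio convergence \eqref{unif conv weight function}. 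You instead decompose $A_n$ into two empirical processes, invoke a uniform strong law for the class $\{x\mapsto x\mathds{1}\{x\le s\}:s\in\R\}$ in addition to the classical Glivenko--Cantelli theorem, pull the suprema out, and control the remaining weight integrals by direct substitution. Your version buys a stronger conclusion (almost sure convergence rather than $o_{\PP}(1)$) at the cost of the weighted Glivenko--Cantelli step, which is true but needs the $x^\pm$ decomposition and convergence at left limits to handle possible atoms of $F$ (no density is assumed in this lemma), as you indicate. The paper's version is a one-line second-moment computation, and it has the side benefit of establishing \eqref{unif conv weight function} explicitly, a fact that is reused repeatedly in the later sections to justify the $\approx$ calculus; if you bypass it here you will still need it there.
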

\begin{proof}
	First, notice that
	\begin{equation*}
		\widehat{d}_n^X (s)
		= \frac{1}{n} \sum_{j=1}^{n} \frac{X_j - \overline{X}_n}{S_n^2} \, \mathds{1}\{X_j > s\}
		= - \frac{1}{n} \sum_{j=1}^{n} \frac{X_j - \overline{X}_n}{S_n^2} \, \mathds{1}\{X_j \leq s\}
		~~(\geq 0) .
	\end{equation*}
	Using the first representation when integrating over $(\overline{X}_n, \infty)$ and the second for $(- \infty, \overline{X}_n)$, we obtain
	\begin{equation*}
		\int_{\R} \widehat{d}_n^X (t) \, \mathrm{d}t
		= \frac{1}{S_n^2} \left( \frac{1}{n} \sum_{j=1}^{n} \big( X_j - \overline{X}_n \big)^2 \right)
		= 1.
	\end{equation*}
	Now, since
	\begin{equation*}
		\int_{- \infty}^{s} \widehat{d}_n^X (t) \, \mathrm{d}t
		= - \frac{1}{n} \sum_{j=1}^{n} \frac{X_j - \overline{X}_n}{S_n^2} \int_{- \infty}^{s} \mathds{1}\{X_j \leq t\} \, \mathrm{d}t
		= \widehat{F}_n^X (s) ,
	\end{equation*}
	we conclude that $\widehat{F}_n^X$ is a continuous distribution function. We observe that, by the strong law of large numbers and the almost sure convergence $(\overline{X}_n, S_n^2) \to (0,1)$, we have
	\begin{align*}
		\widehat{F}_n^X (s)
		&= \frac{1}{S_n^2} \cdot \frac{1}{n} \sum_{j=1}^{n} X_j (X_j - s) \mathds{1}\{ X_j \leq s \} - \frac{\overline{X}_n}{S_n^2} \cdot \frac{1}{n} \sum_{j=1}^{n} (X_j - s) \mathds{1}\{ X_j \leq s \} \\
		&\longrightarrow F^X (s)
	\end{align*}
	$\mathbb{P}$-almost surely for $n \to \infty$ and any $s \in \R$. The proof of the classical Glivenko-Cantelli theorem applies to $\widehat{F}_n^X$ which yields ($\ref{Gliv Cant for emp zero bias}$). For ($\ref{unif almost sure conv}$), we record that straightforward calculations, using Fubini's theorem and ($\ref{requirements weight function}$), give
	\begin{equation*}
		\E \left[ \int_{\R} A_n(s)^2 \, \omega(s) \, \mathrm{d}s \right] \longrightarrow 0 ~~ \text{as} ~ n \to \infty ,
	\end{equation*}
	so $\norm{A_n}_{L^2}^2 = o_{\mathbb{P}}(1)$. Noting that $\omega$ is continuous, we have
	\begin{equation} \label{unif conv weight function}
		\sup\limits_{s \, \in \, \R} \left| \omega\left( \frac{s - \overline{X}_n}{S_n} \right) \bigg/ \omega(s) - 1 \right| \longrightarrow 0 ~~ \text{as} ~ n \to \infty
	\end{equation}
	on a set of measure one. Therefore,
	\begin{align*}
		&\left| \int_{\R} \big(A_n(s)\big)^2 \, \omega\left(\frac{s - \overline{X}_n}{S_n}\right) \mathrm{d}s - \int_{\R} \big(A_n(s)\big)^2 \, \omega(s) \, \mathrm{d}s \right| \\
		&~~~~~~~~~~~~~ \leq \sup\limits_{s \, \in \, \R} \left| \omega\left( \frac{s - \overline{X}_n}{S_n} \right) \bigg/ \omega(s) - 1 \right| \cdot \norm{A_n}_{L^2}^2 \\
		&~~~~~~~~~~~~~ \longrightarrow 0
	\end{align*}
	in probability, for $n \to \infty$, which finishes the proof.
\end{proof}

\section{Consistency and limit distributions under fixed alternatives}
\label{Section under fixed alternatives}
A first use of Lemma $\ref{prelim lemma}$ becomes evident when realizing that, by a simple change of variable,
\begin{equation} \label{teststat 1 after change of var}
	G_n^{(1)} = \frac{n}{S_n} \int_{\R} \left( \widehat{F}_n^X (s) - \frac{1}{n} \sum_{j=1}^{n} \mathds{1}\{ X_j \leq s \} \right)^2 \omega\left( \frac{s - \overline{X}_n}{S_n} \right) \, \mathrm{d}s
\end{equation}
and
\begin{equation} \label{teststat 2 after change of var}
	G_n^{(2)} = \frac{n}{S_n} \int_{\R} \left( \widehat{F}_n^X (s) - \Phi\left( \frac{s - \overline{X}_n}{S_n} \right) \right)^2 \omega\left( \frac{s - \overline{X}_n}{S_n} \right) \, \mathrm{d}s .
\end{equation}
In the broad, non-parametric setting introduced at the beginning of section \ref{distribution function tests}, the limit in (\ref{Gliv Cant for emp zero bias}), the classical theorem of Glivenko-Cantelli and ($\ref{unif conv weight function}$) yield the following theorem.
\begin{theorem} \label{thm consistency}
	As $n \to \infty$, we have
	\begin{equation*}
		\frac{G_n^{(1)}}{n}
		\longrightarrow \int_{\R} \left( F^X (s) - F(s) \right)^2 \omega(s) \, \mathrm{d}s
		= \norm{F^X - F}_{L^2}^2
		= \Delta^{(1)}
	\end{equation*}
	and
	\begin{equation*}
		\frac{G_n^{(2)}}{n}
		\longrightarrow \int_{\R} \left( F^X (s) - \Phi(s) \right)^2 \omega(s) \, \mathrm{d}s
		= \norm{F^X - \Phi}_{L^2}^2
		= \Delta^{(2)} ,
	\end{equation*}
	each convergence being $\mathbb{P}$-almost surely.
\end{theorem}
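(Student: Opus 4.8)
The plan is to prove both statements on a single event of probability one on which, simultaneously, $(\overline{X}_n,S_n^2)\to(0,1)$, the uniform convergence $\sup_{s}|\widehat{F}_n^X(s)-F^X(s)|\to 0$ from (\ref{Gliv Cant for emp zero bias}) holds, the classical Glivenko--Cantelli theorem gives $\sup_{s}|\frac1n\sum_{j=1}^n\mathds{1}\{X_j\le s\}-F(s)|\to 0$, and (\ref{unif conv weight function}) holds; the intersection of finitely many almost sure events again has probability one. I would work from the change-of-variable expressions (\ref{teststat 1 after change of var}) and (\ref{teststat 2 after change of var}), so that after dividing by $n$ the prefactor $n/S_n$ reduces to $1/S_n\to 1$. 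A preliminary observation I would record is that (\ref{requirements weight function}) forces $\int_{\R}\omega(t)\,\mathrm{d}t<\infty$: on $\{|t|\ge 1\}$ one has $\omega\le t^4\omega$, while on $\{|t|<1\}$ the continuous function $\omega$ is bounded.

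For $G_n^{(1)}$ set $D_n(s)=\widehat{F}_n^X(s)-\frac1n\sum_{j=1}^n\mathds{1}\{X_j\le s\}$ and $D(s)=F^X(s)-F(s)$. Combining the two Glivenko--Cantelli-type limits above yields $\sup_{s}|D_n(s)-D(s)|\to 0$. Since $\widehat{F}_n^X$, shown in Lemma \ref{prelim lemma} to be a distribution function, and the empirical distribution function both take values in $[0,1]$, and likewise $F^X,F$, both $D_n$ and $D$ are bounded by $1$ in absolute value. Hence $|D_n^2-D^2|\le 2|D_n-D|\le 2\sup_{s}|D_n-D|$, and with $\int_{\R}\omega<\infty$ this gives $\int_{\R}D_n(s)^2\,\omega(s)\,\mathrm{d}s\to\int_{\R}D(s)^2\,\omega(s)\,\mathrm{d}s=\Delta^{(1)}$. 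For $G_n^{(2)}$ I would argue identically with $D_n(s)=\widehat{F}_n^X(s)-\Phi((s-\overline{X}_n)/S_n)$ and $D(s)=F^X(s)-\Phi(s)$; the only new ingredient is the uniform convergence $\sup_{s}|\Phi((s-\overline{X}_n)/S_n)-\Phi(s)|\to 0$, which follows from P\'olya's theorem, since $s\mapsto\Phi((s-\overline{X}_n)/S_n)$ is the distribution function of $\mathcal{N}(\overline{X}_n,S_n^2)$, converges pointwise to the continuous limit $\Phi$, and pointwise convergence to a continuous distribution function is automatically uniform.

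It remains to replace the scaled weight $\omega((s-\overline{X}_n)/S_n)$ by $\omega(s)$ and to absorb the factor $1/S_n$. Writing $I_n=\int_{\R}D_n(s)^2\,\omega((s-\overline{X}_n)/S_n)\,\mathrm{d}s$ and $J_n=\int_{\R}D_n(s)^2\,\omega(s)\,\mathrm{d}s$, I would estimate
\begin{equation*}
	|I_n-J_n|\le\sup_{s\,\in\,\R}\left|\omega\!\left(\frac{s-\overline{X}_n}{S_n}\right)\bigg/\omega(s)-1\right|\cdot J_n ,
\end{equation*}
where $J_n\le\int_{\R}\omega<\infty$ is bounded and the supremum tends to $0$ by (\ref{unif conv weight function}); thus $I_n-J_n\to 0$. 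Since $J_n\to\Delta^{(1)}$ and $S_n\to 1$, we obtain $G_n^{(1)}/n=I_n/S_n\to\Delta^{(1)}$, and the same chain of estimates delivers $G_n^{(2)}/n\to\Delta^{(2)}$.

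The main obstacle, and the reason the change-of-variable forms (\ref{teststat 1 after change of var})--(\ref{teststat 2 after change of var}) are indispensable, is that the integrals are taken against the \emph{estimated} weight $\omega((s-\overline{X}_n)/S_n)$ rather than against $\omega(s)$, with an accompanying Jacobian factor $1/S_n$; decoupling these two effects through the uniform ratio estimate (\ref{unif conv weight function}) together with the a priori bound $J_n\le\int_{\R}\omega$ is what makes the limit identifiable. The second delicate point is the uniform, rather than merely pointwise, convergence of the normal distribution function evaluated at the scaled residuals, for which the continuity of the limit $\Phi$ is essential.
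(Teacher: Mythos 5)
Your argument is correct and follows exactly the route the paper intends: the paper proves this theorem in a single sentence by citing the change-of-variable forms (\ref{teststat 1 after change of var})--(\ref{teststat 2 after change of var}), the uniform limit (\ref{Gliv Cant for emp zero bias}), the classical Glivenko--Cantelli theorem and the weight-ratio limit (\ref{unif conv weight function}), which are precisely your ingredients. Your write-up merely fills in the routine details the paper omits (integrability of $\omega$, boundedness of the distribution-function differences, and the P\'olya-theorem step giving uniform convergence of $\Phi((s-\overline{X}_n)/S_n)$ to $\Phi$, the last being a genuinely necessary point the paper leaves implicit).
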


By Proposition $\ref{Prop existence zero bias}$ together with ($\ref{Stein characterization}$), the limits $\Delta^{(1)}$ and $\Delta^{(2)}$ are positive if $X$ has a non-normal distribution. Consequently, any level-$\alpha$-test based on $G_n^{(1)}$ or $G_n^{(2)}$ is consistent against each alternative with existing second moment. \\
We will now use ($\ref{unif almost sure conv}$) to specify Theorem $\ref{thm consistency}$. More precisely, we study the limit distributions of our statistics under fixed alternatives. In the following, we write $\mathcal{U}_n(s) \approx \mathcal{V}_n(s)$ whenever
\begin{equation*}
	\int_{\R} \big( \mathcal{U}_n(s) - \mathcal{V}_n(s) \big)^2 \, \omega\left( \frac{s - \overline{X}_n}{S_n} \right) \mathrm{d}s
	= o_{\mathbb{P}}(1) ~~ \text{as} ~ n \to \infty .
\end{equation*}
Here, $\mathcal{U}_n$ and $\mathcal{V}_n$ are random elements of our Hilbert space. By ($\ref{unif conv weight function}$), this new notation is equivalent to $\norm{\mathcal{U}_n - \mathcal{V}_n}_{L^2}^2 = o_{\mathbb{P}}(1)$. 
\begin{theorem} \label{thm teststat 1 conv under altern}
	Let $X, X_1, X_2, \dotso$ be iid., non-normal random variables with $\E X^4 < \infty$, distribution function $F$, continuously differentiable density $p$ satisfying $\sup_{s \, \in \, \R} |s \cdot p(s)| \leq K_1 \in (0, \infty)$ and $\sup_{s \, \in \, \R} |p^{\prime}(s)| \leq K_2 \in (0, \infty)$ and, w.l.o.g., $\E X = 0$, $\mathbb{V}(X) = 1$. Then, as $n\rightarrow\infty$,
	\begin{equation} \label{teststat 1 conv under altern}
		\sqrt{n} \left( \frac{G_n^{(1)}}{n} - \Delta^{(1)} \right)
		\stackrel{\mathcal{D}}{\longrightarrow}
		\mathcal{N}\left(0, \tau_{(1)}^2\right) ,
	\end{equation}
	where
	\begin{equation*}
		\tau_{(1)}^2 = 4 \int_{\R} \int_{\R} \mathcal{C}^{(1)}(s,t) \big( F^X(s) - F(s) \big) \big( F^X(t) - F(t) \big) \, \omega(s) \, \omega(t) \, \mathrm{d}s \, \mathrm{d}t
	\end{equation*}
	with
	\begin{align*}
		\mathcal{C}^{(1)}(s,t) = \E \big[ C^{(1)}(s) \, C^{(1)}(t) \big],\quad s,t\in\mathbb{R},
	\end{align*}
	and
	\begin{align*}
		C^{(1)}(s) = & \, X (X - s) \mathds{1}\{ X \leq s \} - X \, \E \big[ (X - s) \mathds{1}\{ X \leq s \} \big] - X^2 \, F^X(s) \\
		&- \mathds{1}\{ X \leq s \} + F(s) - \left( \tfrac{1}{2} (1 - X^2) \cdot s - X \right) \big( d^X(s) - p(s) \big),\quad s\in\mathbb{R} .
	\end{align*}
\end{theorem}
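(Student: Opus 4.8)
The guiding idea is to linearise the squared $L^2$-distance and reduce \eqref{teststat 1 conv under altern} to the classical central limit theorem for an i.i.d.\ sum. Write $H_n(t) = n^{-1}\sum_{j=1}^n Y_{n,j}(Y_{n,j}-t)\mathds{1}\{Y_{n,j}\le t\}$ and $\widehat{G}_n(t) = n^{-1}\sum_{j=1}^n \mathds{1}\{Y_{n,j}\le t\}$ for the two empirical objects appearing in \eqref{teststat 1}, set $W_n = H_n - \widehat{G}_n$, and let $D = F^X - F$ be the almost sure limit supplied by Lemma \ref{prelim lemma} and the theorem of Glivenko--Cantelli. Then $G_n^{(1)}/n = \norm{W_n}_{L^2}^2$ and $\Delta^{(1)} = \norm{D}_{L^2}^2$, so that
\[
	\sqrt{n}\left( \frac{G_n^{(1)}}{n} - \Delta^{(1)} \right)
	= \sqrt{n} \int_{\R} \big( W_n(t)^2 - D(t)^2 \big)\,\omega(t)\,\mathrm{d}t .
\]
Factoring $W_n^2 - D^2 = 2 D (W_n - D) + (W_n - D)^2$ and anticipating from the expansion below that $\norm{\sqrt{n}(W_n - D)}_{L^2} = O_{\mathbb{P}}(1)$, the quadratic remainder contributes $\sqrt{n}\,\norm{W_n - D}_{L^2}^2 = n^{-1/2} O_{\mathbb{P}}(1) = o_{\mathbb{P}}(1)$, and the statistic is asymptotically equal to $2\langle D, \sqrt{n}(W_n - D)\rangle_{L^2}$.

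The heart of the argument is a stochastic expansion of $\sqrt{n}(W_n - D)$ in $L^2$. Since $Y_{n,j}\le t$ is equivalent to $X_j \le S_n t + \overline{X}_n$, one checks that $H_n(t) = \widehat{F}_n^X(S_n t + \overline{X}_n)$ and $\widehat{G}_n(t) = \widehat{F}_n(S_n t + \overline{X}_n)$, where $\widehat{F}_n(s) = n^{-1}\sum_j \mathds{1}\{X_j \le s\}$ is the empirical distribution function of the raw data. I would expand both processes at the random point $s_n = S_n t + \overline{X}_n$ and in the estimators $\overline{X}_n, S_n$ around $(0,1)$, using $\sqrt{n}\,\overline{X}_n = n^{-1/2}\sum_j X_j + o_{\mathbb{P}}(1)$ and $\sqrt{n}(S_n^2 - 1) = n^{-1/2}\sum_j (X_j^2 - 1) + o_{\mathbb{P}}(1)$. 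Three groups of terms then arise: the genuine empirical fluctuations, contributing $X(X-t)\mathds{1}\{X\le t\}$ and $-\mathds{1}\{X\le t\}$ against their expectations $F^X(t)$ and $F(t)$; the fluctuations of the normalising quantities $S_n^{-2}$ and $\overline{X}_n$ inside $\widehat{F}_n^X$, which combine with these centrings to yield $-X^2 F^X(t)$ and $-X\,\E[(X-t)\mathds{1}\{X\le t\}]$; and the deterministic shift of the argument, where the differentiability of $F^X$ and $F$ with $(F^X)^{\prime} = d^X$, $F^{\prime} = p$ turns $s_n - t = (S_n-1)t + \overline{X}_n$ into $-\big(\tfrac12(1 - X^2)t - X\big)(d^X(t) - p(t))$. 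Collecting these contributions yields $\sqrt{n}(W_n(t) - D(t)) = n^{-1/2}\sum_{j=1}^n C_j^{(1)}(t) + R_n(t)$ with $\norm{R_n}_{L^2} = o_{\mathbb{P}}(1)$, where $C_j^{(1)}$ denotes the function $C^{(1)}$ from the statement with $X$ replaced by $X_j$.

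With this expansion the Cauchy--Schwarz inequality gives $|2\langle D, R_n\rangle_{L^2}| \le 2\norm{D}_{L^2}\norm{R_n}_{L^2} = o_{\mathbb{P}}(1)$, whence
\[
	\sqrt{n}\left( \frac{G_n^{(1)}}{n} - \Delta^{(1)} \right)
	= \frac{1}{\sqrt{n}} \sum_{j=1}^n \xi_j + o_{\mathbb{P}}(1),
	\qquad
	\xi_j = 2 \int_{\R} D(t)\, C_j^{(1)}(t)\, \omega(t)\,\mathrm{d}t .
\]
The $\xi_j$ are i.i.d.; a short computation using $\E X = 0$ and $\E X^2 = 1$ shows $\E[C^{(1)}(t)] = 0$ for every $t$, so that $\E\xi_1 = 0$, while $\E X^4 < \infty$ together with \eqref{requirements weight function} yields $\E\xi_1^2 < \infty$. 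The Lindeberg--L\'evy central limit theorem now delivers \eqref{teststat 1 conv under altern}, and Fubini's theorem identifies $\tau_{(1)}^2 = \mathbb{V}(\xi_1) = 4\int_{\R}\int_{\R} \mathcal{C}^{(1)}(s,t)\, D(s)\, D(t)\,\omega(s)\,\omega(t)\,\mathrm{d}s\,\mathrm{d}t$.

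The principal obstacle is the $L^2$-valued linearisation of the second step, namely showing that $R_n$ is $o_{\mathbb{P}}(1)$ in $\norm{\cdot}_{L^2}$ and not merely pointwise. Two points require the regularity hypotheses. First, passing from the empirical processes evaluated at the data-dependent argument $s_n = S_n t + \overline{X}_n$ to their values at $t$ is a stochastic equicontinuity statement that is delicate because $s_n - t = (S_n-1)t + \overline{X}_n$ is not uniformly small in $t$; here the decay of $\omega$ from \eqref{requirements weight function}, the moment bound $\E X^4 < \infty$ and the envelope bound $\sup_s|s\,p(s)| \le K_1$ are used to dominate the integrands in $L^2(\omega)$. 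Second, the second-order Taylor terms in $\overline{X}_n$ and $S_n - 1$ must be controlled in $L^2$, for which the continuous differentiability of $p$ and the bound $\sup_s |p^{\prime}(s)| \le K_2$ supply the domination. Once these uniform estimates are in place, the reduction to the i.i.d.\ sum and the application of the central limit theorem are routine.
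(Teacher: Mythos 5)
Your proposal is correct and follows essentially the same route as the paper: the same quadratic decomposition $W_n^2 - D^2 = 2D(W_n-D) + (W_n-D)^2$ (the paper's Chapman-type integral splitting, carried out after a change of variables), the same Taylor expansions in $\overline{X}_n$ and $S_n$, and the same i.i.d.\ linearisation with summands $C_j^{(1)}$, whose $L^2$-control you rightly identify as the crux and tie to the hypotheses $\sup_s|s\,p(s)|\le K_1$, $\sup_s|p'(s)|\le K_2$, $\E X^4<\infty$ and \eqref{requirements weight function}. The only difference is cosmetic: you project onto $D$ first and invoke the scalar Lindeberg--L\'evy theorem for $\xi_j=2\langle D, C_j^{(1)}\rangle_{L^2}$, whereas the paper applies the central limit theorem in the Hilbert space $L^2$ to $n^{-1/2}\sum_j C_j^{(1)}$ and then uses the continuous mapping theorem; both need the same $O_{\mathbb{P}}(1)$ bound on $\lVert n^{-1/2}\sum_j C_j^{(1)}\rVert_{L^2}$ to dispose of the quadratic remainder, which in your version follows from $\E\lVert C^{(1)}\rVert_{L^2}^2<\infty$ and Markov's inequality.
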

\begin{proof}
	The main idea of the proof is to approximate the term figuring within the integral in ($\ref{teststat 1 after change of var}$) by ($n^{-1/2}$ times) a sum of iid. random elements of $L^2$, for which the central limit theorem in Hilbert spaces is applicable. \\
	Setting
	\begin{equation*}
		U_n (s) = \widehat{F}_n^X (s) - \frac{1}{n} \sum\limits_{j=1}^{n} \mathds{1}\{ X_j \leq s \} - F^X \left( \frac{s - \overline{X}_n}{S_n} \right) + F \left( \frac{s - \overline{X}_n}{S_n} \right) ,\quad s\in\mathbb{R},
	\end{equation*}
	a change of variable in both integrals and an integral decomposition, as used by \cite{Chap:1958}, gives
	\begin{align} \label{decomp Chapman}
		\sqrt{n} \left( \frac{G_n^{(1)}}{n} - \Delta^{(1)} \right)
		= & ~ \frac{1}{S_n} \left\{ 2 \int_{\R} \sqrt{n} \, U_n(s) \cdot \left[ F^X (\widetilde{s}) - F (\widetilde{s}) \right] \omega (\widetilde{s}) \, \mathrm{d}s \right. \nonumber \\
		&\left. ~~~~~~ + \frac{1}{\sqrt{n}} \int_{\R} \big( \sqrt{n} \, U_n(s) \big)^2 \, \omega (\widetilde{s}) \, \mathrm{d}s \right\} ,
	\end{align}
	where $\widetilde{s}=S_n^{-1} (s - \overline{X}_n)$. Invoking Proposition $\ref{Prop existence zero bias}$, a Taylor expansion yields
	\begin{equation} \label{taylor exp F^X}
		F^X \left( \frac{s - \overline{X}_n}{S_n} \right)
		= F^X(s) + d^X(s) \left( s \cdot \left( \frac{1}{S_n} - 1 \right) - \frac{\overline{X}_n}{S_n} \right)
		- R_n(s) ,
	\end{equation}
	where
	\begin{equation*}
		R_n(s)
		= \frac{1}{2} \, \xi_n \, p(\xi_n) \left( s \cdot \left( \frac{1}{S_n} - 1 \right) - \frac{\overline{X}_n}{S_n} \right)^2
	\end{equation*}
	with $| \xi_n - s | \leq | s \cdot (S_n^{-1} - 1) - S_n^{-1} \, \overline{X}_n |$. Using $\sup_{s \, \in \, \R} |s \cdot p(s)| \leq K_1$, $\sqrt{n} (S_n^{-1} - 1) = O_{\mathbb{P}}(1)$ and $\sqrt{n} \, \overline{X}_n = O_{\mathbb{P}}(1)$, we have $\sqrt{n} \, R_n(s) \approx 0$ (observe that we need ($\ref{requirements weight function}$) to guarantee that $R_n \in L^2$). With an analogous expansion for $F$, utilizing $\sup_{s \, \in \, \R} |p^{\prime}(s)| \leq K_2$, we note
	\begin{equation} \label{taylor exp F}
		\sqrt{n} \, F \left( \frac{s - \overline{X}_n}{S_n} \right)
		\approx \sqrt{n} \left\{ F(s) + p(s) \left( s \cdot \left( \frac{1}{S_n} - 1 \right) - \frac{\overline{X}_n}{S_n} \right) \right\} .
	\end{equation}
	Now, by ($\ref{unif almost sure conv}$) of Lemma $\ref{prelim lemma}$ and $\sqrt{n} \, \overline{X}_n = O_{\mathbb{P}}(1)$,
	\begin{align} \label{approx of F_n^X}
		\sqrt{n} \, \widehat{F}_n^X (s)
		\approx & ~ \frac{\sqrt{n}}{S_n^2} \left\{ \frac{1}{n} \sum\limits_{j=1}^{n} X_j (X_j - s) \mathds{1}\{ X_j \leq s \} \right. \nonumber \\
		&\left. ~~~~~~~~ - \overline{X}_n \, \E \big[ (X - s) \mathds{1}\{X \leq s\} \big] \vphantom{\sum\limits_{j=1}^{n}} \right\} .
	\end{align}
	Combining ($\ref{approx of F_n^X}$) and the expansions ($\ref{taylor exp F^X}$), ($\ref{taylor exp F}$) we get
	\begin{align*}
		\sqrt{n} \, U_n(s)
		\approx & \, \frac{\sqrt{n}}{S_n^2} \left\{ \frac{1}{n} \sum\limits_{j=1}^{n} X_j (X_j - s) \mathds{1}\{ X_j \leq s \} - \overline{X}_n \, \E \big[ (X - s) \mathds{1}\{X \leq s\} \big] \right. \\
		&\left. ~~~~~~~ - S_n^2 \left( \frac{1}{n} \sum\limits_{j=1}^{n} \mathds{1}\{ X_j \leq s \} - F(s) \right) - S_n^2 \, F^X(s) \right. \\
		&\left. ~~~~~~~ - S_n \big( d^X(s) - p(s) \big) \big( s\cdot (1 - S_n) - \overline{X}_n \big) \vphantom{\sum\limits_{j}^{n}}\right\} .
	\end{align*}
	Again, using $\sqrt{n} \, \overline{X}_n = O_{\mathbb{P}}(1)$, we realize that
	\begin{equation*}
		\sqrt{n} \, S_n^2 \, F^X(s)
		= \left( \frac{1}{\sqrt{n}} \sum\limits_{j=1}^{n} X_j^2 - \sqrt{n} \, \overline{X}_n^2 \right) F^X(s)
		\approx \frac{1}{\sqrt{n}} \sum\limits_{j=1}^{n} X_j^2 \, F^X(s) .
	\end{equation*}
	Moreover, by the classical Glivenko-Cantelli theorem and $\sqrt{n} \, (S_n^2 - 1) = O_{\mathbb{P}}(1)$,
	\begin{align*}
		\sqrt{n} \, S_n^2 \left( \frac{1}{n} \sum\limits_{j=1}^{n} \mathds{1}\{ X_j \leq s \} - F(s) \right)
		\approx \frac{1}{\sqrt{n}} \sum\limits_{j=1}^{n} \big( \mathds{1}\{ X_j \leq s \} - F(s) \big) .
	\end{align*}
	Further, a Taylor expansion of the square root gives
	\begin{equation*}
		\sqrt{n} \, (1 - S_n)
		= - \sqrt{n} \, \left( \tfrac{1}{2} (S_n^2 - 1) - \tfrac{1}{8} \xi_n^{- 3/2} (S_n^2 - 1)^2 \right)
	\end{equation*}
	where $|\xi_n - 1| \leq |S_n^2 - 1|$. Therefore, subliminally assuming that $n$ is large enough to ensure $|S_n^2 - 1| < 1/2$ (on a set of measure one), $|\sqrt{n} \, \xi_n^{- 3/2} (S_n^2 - 1)^2| < 3 \sqrt{n} \, | (S_n^2 - 1)^2 | = o_{\mathbb{P}}(1)$ and
	\begin{equation*}
		\sqrt{n} \, (1 - S_n)
		\approx \frac{\sqrt{n}}{2} \big(1 - S_n^2\big)
		\approx \frac{1}{\sqrt{n}} \sum\limits_{j=1}^{n} \frac{1}{2} \big(1 - X_j^2\big) .
	\end{equation*}
	Putting these last four approximations together, we have
	\begin{equation} \label{asymp repr U_n}
		\sqrt{n} \, U_n(s) \approx \frac{1}{S_n^2} \, \frac{1}{\sqrt{n}} \sum\limits_{j=1}^{n} W_j(s) ,
	\end{equation}
	where
	\begin{align*}
		W_j(s) = & \, X_j (X_j - s) \mathds{1}\{ X_j \leq s \} - X_j \, \E \big[ (X - s) \mathds{1}\{ X \leq s \} \big] - X_j^2 \, F^X(s) \\
		&- \mathds{1}\{ X_j \leq s \} + F(s) - \left( \tfrac{1}{2} (1 - X_j^2) \cdot s - X_j \right) \big( d^X(s) - p(s) \big) .
	\end{align*}
	Notice that $W_1, \dots, W_n$ are iid. random elements of $L^2$ with $\E W_1 = 0$. Hence, the central limit theorem for separable Hilbert spaces (see \cite{LT:1991}, Corollary 10.9) implies
	\begin{equation*}
		\frac{1}{\sqrt{n}} \sum\limits_{j=1}^{n} W_j(\cdot) \stackrel{\mathcal{D}}{\longrightarrow} \mathcal{W}(\cdot) ,
	\end{equation*}
	where $\mathcal{W} \in L^2$ is a centred Gaussian element satisfying $\E \norm{\mathcal{W}}_{L^2}^2 < \infty$. With ($\ref{asymp repr U_n}$), the stochastic boundedness of $n^{- 1/2} \sum_{j=1}^{n} W_j$ and ($\ref{unif conv weight function}$), decomposition ($\ref{decomp Chapman}$) reads as
	\begin{equation*}
		\sqrt{n} \left( \frac{G_n^{(1)}}{n} - \Delta^{(1)} \right)
		= 2 \, \langle \sqrt{n} \, U_n , F^X - F \rangle_{L^2}
		+ \frac{1}{\sqrt{n}} \norm{\sqrt{n} \, U_n}_{L^2}^2
		+ o_{\mathbb{P}}(1) .
	\end{equation*}
	The continuous mapping theorem and Slutzki's Lemma imply
	\begin{equation*}
		\sqrt{n} \left( \frac{G_n^{(1)}}{n} - \Delta^{(1)} \right)
		\stackrel{\mathcal{D}}{\longrightarrow}
		2 \, \langle \mathcal{W}, F^X - F \rangle_{L^2} .
	\end{equation*}
	Letting
	\begin{equation*}
		\mathcal{C}^{(1)}(s,t) = \E \big[ \mathcal{W}(s) \, \mathcal{W}(t) \big]
	\end{equation*}
	be the covariance kernel of $\mathcal{W}$, we finally record that $2 \, \langle \mathcal{W}, F^X - F \rangle_{L^2}$ has the normal distribution $\mathcal{N}(0, \tau_{(1)}^2)$, where
	\begin{align*}
		\tau_{(1)}^2
		&= 4\, \E \big[ \langle \mathcal{W}, F^X - F \rangle_{L^2}^2 \big] \\
		&= 4 \int_{\R} \int_{\R} \mathcal{C}^{(1)}(s,t) \big( F^X(s) - F(s) \big) \big( F^X(t) - F(t) \big) \, \omega(s) \, \omega(t) \, \mathrm{d}s \, \mathrm{d}t ,
	\end{align*}
	ending the proof.
\end{proof}

Applying the reasoning of Theorem $\ref{thm teststat 1 conv under altern}$ to $G_n^{(2)}$, we realize that $\Phi$ will drop out when considering the decomposition of the integrals. Proceeding with the remaining terms exactly as before, we obtain an analogous statement for the second statistic under slightly weaker conditions.
\begin{corollary} \label{coro teststat 2 conv under altern}
	Let $X, X_1, X_2, \dotso$ be iid., non-normal random variables with $\E X^4 < \infty$, distribution function $F$ and Lebesgue density $p$. Further, assume $\sup_{s \, \in \, \R} |s \cdot p(s)| < \infty$ and, w.l.o.g, $\E X = 0$, $\mathbb{V}(X) = 1$. Then, as $n\rightarrow\infty,$
	\begin{equation} \label{teststat 2 conv under altern}
		\sqrt{n} \left( \frac{G_n^{(2)}}{n} - \Delta^{(2)} \right)
		\stackrel{\mathcal{D}}{\longrightarrow}
		\mathcal{N}\left(0, \tau_{(2)}^2\right) ,
	\end{equation}
	where
	\begin{equation*}
		\tau_{(2)}^2 = 4 \int_{\R} \int_{\R} \mathcal{C}^{(2)}(s,t) \big( F^X(s) - \Phi(s) \big) \big( F^X(t) - \Phi(t) \big) \, \omega(s) \, \omega(t) \, \mathrm{d}s \, \mathrm{d}t
	\end{equation*}
	with
	\begin{align*}
		\mathcal{C}^{(2)}(s,t) = \E \big[ C^{(2)}(s) \, C^{(2)}(t) \big],\quad s,t\in\mathbb{R},
	\end{align*}
	and
	\begin{align*}
		C^{(2)}(s) = & \, X (X - s) \mathds{1}\{ X \leq s \} - X \, \E \big[ (X - s) \mathds{1}\{ X \leq s \} \big] \\
		&- X^2 \, F^X(s) - \left( \tfrac{1}{2} (1 - X^2) \cdot s - X \right) d^X(s),\quad s\in\mathbb{R} .
	\end{align*}
\end{corollary}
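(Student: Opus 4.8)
The plan is to run the argument of Theorem \ref{thm teststat 1 conv under altern} almost verbatim, exploiting the fact that the deterministic term $\Phi$ cancels in the relevant centring; this both simplifies the iid summands and explains why the regularity assumption on $p^\prime$ can be dropped. First I would start from the change-of-variable representation (\ref{teststat 2 after change of var}) and introduce the analogue of $U_n$, namely $U_n^{(2)}(s) = \widehat{F}_n^X(s) - \Phi(\widetilde{s}) - F^X(\widetilde{s}) + \Phi(\widetilde{s}) = \widehat{F}_n^X(s) - F^X(\widetilde{s})$, where $\widetilde{s} = S_n^{-1}(s - \overline{X}_n)$. The key observation is that the two copies of $\Phi(\widetilde{s})$ cancel, so that, in contrast to the proof of Theorem \ref{thm teststat 1 conv under altern}, neither the empirical distribution function nor a Taylor expansion of the alternative's distribution function $F$ enters $U_n^{(2)}$. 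Following \cite{Chap:1958}, the same integral decomposition as in (\ref{decomp Chapman}) then expresses $\sqrt{n}(G_n^{(2)}/n - \Delta^{(2)})$ as $S_n^{-1}$ times the sum of the linear functional $2\int_\R \sqrt{n}\,U_n^{(2)}(s)\,[F^X(\widetilde{s})-\Phi(\widetilde{s})]\,\omega(\widetilde{s})\,\mathrm{d}s$ and the quadratic functional $n^{-1/2}\int_\R (\sqrt{n}\,U_n^{(2)}(s))^2\,\omega(\widetilde{s})\,\mathrm{d}s$.

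Next I would derive the iid representation of $\sqrt{n}\,U_n^{(2)}$. As in (\ref{approx of F_n^X}), Lemma \ref{prelim lemma} together with $\sqrt{n}\,\overline{X}_n = O_{\PP}(1)$ yields the approximation of $\sqrt{n}\,\widehat{F}_n^X(s)$, while the Taylor expansion (\ref{taylor exp F^X}) of $F^X(\widetilde{s})$, for which only $\sup_{s \in \R}|s\cdot p(s)| < \infty$ is needed to render the remainder negligible in the sense of $\approx$, takes over the role of the expansion of $F$. Since the empirical-distribution term is absent and $p$ now figures only through $d^X$, combining these with the approximations $\sqrt{n}\,S_n^2\,F^X(s) \approx n^{-1/2}\sum_j X_j^2\,F^X(s)$ and $\sqrt{n}(1-S_n) \approx n^{-1/2}\sum_j \tfrac{1}{2}(1-X_j^2)$ gives $\sqrt{n}\,U_n^{(2)}(s) \approx S_n^{-2}\,n^{-1/2}\sum_{j=1}^n C_j^{(2)}(s)$, where $C_j^{(2)}$ is obtained from $C^{(2)}$ by replacing $X$ with $X_j$. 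This is exactly the expression in the statement: relative to $C^{(1)}$ it is precisely the terms $-\mathds{1}\{X \leq s\}+F(s)$ and the summand $p(s)$ inside the bracket multiplying $d^X$ that are dropped, reflecting the cancellation of $\Phi$ and the consequent weaker hypotheses.

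Finally I would verify that $C_1^{(2)},\dots,C_n^{(2)}$ are iid centred elements of $L^2$, centredness following from $\E X = 0$, $\mathbb{V}(X)=1$ and the identity $F^X(s) = \E[X(X-s)\mathds{1}\{X \leq s\}]$ of Proposition \ref{Prop existence zero bias}, and square-integrability from $\E X^4 < \infty$ together with (\ref{requirements weight function}), and then invoke the central limit theorem for separable Hilbert spaces (\cite{LT:1991}, Corollary 10.9) to obtain a centred Gaussian limit $\mathcal{W}^{(2)}$ with covariance kernel $\mathcal{C}^{(2)}$. Using the stochastic boundedness as in (\ref{asymp repr U_n}) and (\ref{unif conv weight function}) to pass from $\omega(\widetilde{s})$ to $\omega(s)$, the quadratic term is $n^{-1/2}O_{\PP}(1) = o_{\PP}(1)$, while the linear term converges, by the continuous mapping theorem and Slutsky's lemma, to $2\,\langle \mathcal{W}^{(2)}, F^X-\Phi \rangle_{L^2}$, which is $\mathcal{N}(0,\tau_{(2)}^2)$ with the stated variance. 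I do not expect a genuine obstacle: the only new ingredient is the elementary cancellation of $\Phi$, and the remainder is the same bookkeeping of the $\approx$-approximations as before; the single point requiring care is to confirm that dropping the $p^\prime$-bound is harmless, i.e. that every place where $\sup_{s\in\R}|p^\prime(s)|$ was used in Theorem \ref{thm teststat 1 conv under altern} corresponds to an $F$- or empirical-distribution term that no longer appears.
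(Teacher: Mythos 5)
Your proposal is correct and takes essentially the same route as the paper, which proves the corollary by the two-line observation that $\Phi$ cancels in the Chapman decomposition so that the argument of Theorem \ref{thm teststat 1 conv under altern} carries over with the empirical-distribution term and the Taylor expansion of $F$ (and hence the bound on $p^{\prime}$) removed. Your accounting of exactly which summands of $C^{(1)}$ disappear to produce $C^{(2)}$, and why only $\sup_{s\in\R}|s\cdot p(s)|<\infty$ is still needed, matches the paper's reasoning.
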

\begin{remark} \label{Remark for distr under fixed alt}
	Note that for the proof of Theorem $\ref{thm teststat 1 conv under altern}$ we redeployed a line of proof put forward by \cite{BEH:2016}. The asymptotic normality also qualifies our statistics for the applications they propose (see also \cite{BH:2017}).
	
	First, we fix $\alpha \in (0,1)$ and denote by $q_{\alpha} = \Phi^{-1} (1 - \alpha /2)$ the $(1 - \alpha /2)$-quantile of the standard normal distribution. Letting $\widehat{\tau}_{(k),n}^2 = \widehat{\tau}_{(k),n}^2 (X_1, \dots, X_n)$ be a (weakly) consistent estimator of $\tau_{(k)}^2$, $k = 1,2$, figuring in Theorem $\ref{thm teststat 1 conv under altern}$ and Corollary $\ref{coro teststat 2 conv under altern}$, respectively, ($\ref{teststat 1 conv under altern}$) and ($\ref{teststat 2 conv under altern}$) immediately indicate that
	\begin{equation} \label{confidence interval}
		I_n = \left[ \frac{G_n^{(k)}}{n} - \frac{q_{\alpha} \, \widehat{\tau}_{(k),n}}{\sqrt{n}} , \, \frac{G_n^{(k)}}{n} + \frac{q_{\alpha} \, \widehat{\tau}_{(k),n}}{\sqrt{n}} \right]
	\end{equation}
	is an asymptotic confidence interval for $\Delta^{(k)} = \Delta^{(k)}(F)$ at level $1 - \alpha$. Here, $F$ satisfies the assumptions of Theorem  $\ref{thm teststat 1 conv under altern}$ (or Corollary $\ref{coro teststat 2 conv under altern}$). As was briefly explained in the introduction, one main objective of applying Stein's method for the normal distribution is to assess how close a given distribution is to being normal. Thus, seeing $\Delta^{(1)}(F)$ and $\Delta^{(2)}(F)$ as 'measure', in the $L^2$-distance, of how far $F$ differs from the standard normal distribution, we also developed a procedure for empirical assessments of this kind. \\
	Second, we emphasize that our statistics can be employed for inverse testing problems. Namely, if $\Delta_0 > 0$ is a given distance of tolerance, tests that reject $H_{\Delta_0}$ if
	\begin{equation*}
		\frac{G_n^{(k)}}{n} \leq
		\Delta_0 - \frac{\widehat{\tau}_{(k),n}}{\sqrt{n}} \, \Phi^{-1}(1 - \alpha)
	\end{equation*}
	are asymptotic level-$\alpha$-tests for testing
	\begin{equation*}
		H_{\Delta_0} : \Delta^{(k)}(F) \geq \Delta_0 ~~ \text{against} ~~ K_{\Delta_0} : \Delta^{(k)} < \Delta_0 .
	\end{equation*}
	These tests are consistent against each alternative and aim at validating a whole non-parametric neighborhood of the hypothesized, underlying normality. Unfortunately, the direct approach to obtain estimators for $\tau^2_{(k)}$ does not lead to viable results. Trying to replace each expected value and distribution function figuring in the associated covariance kernel by its natural estimator and explicitly calculate the integrals in the definition of $\tau^2_{(k)}$ with those estimators inserted, we observe that some of these integrals can not be solved. In the case of $G_n^{(1)}$ this is due to the occurrence of $p$ in $\mathcal{C}^{(1)}$ which has to be replaced by some kernel density estimator and for $G_n^{(2)}$ the terms involving $\Phi$ and $\omega$ remain intractable for the weight functions we consider (see section $\ref{Section emp results}$). Since these unresolved integrals depend on the data and emerge within double or triple sums, the resulting estimators are impractical. \\
	Finally, we suppose $\{ c_n^{(k)} \} \subset (0, \infty)$ is the sequence of critical values for a level-$\alpha$-test based on $G_n^{(k)}$, $k = 1,2$, i.e., $H_0$ is rejected if $G_n^{(k)} > c_n^{(k)}$ and $\lim_{n \to \, \infty} \mathbb{P}_{H_0}(G_n^{(k)} > c_n^{(k)}) = \alpha$. For an alternative distribution $F$ satisfying the relevant prerequisites of Theorem $\ref{thm teststat 1 conv under altern}$ or Corollary $\ref{coro teststat 2 conv under altern}$, we can approximate the power of the test against this alternative by
	\begin{align} \label{approx power function}
		\mathbb{P}_{F} \big( G_n^{(k)} > c_n^{(k)} \big)
		& = \mathbb{P}_{F} \left( \frac{\sqrt{n}}{\tau_{(k)}} \left\{ \frac{G_n^{(k)}}{n} - \Delta^{(k)} \right\} > \frac{\sqrt{n}}{\tau_{(k)}} \left\{ \frac{c_n^{(k)}}{n} - \Delta^{(k)} \right\} \right) \nonumber \\
		& \stackrel{\sim}{=} 1 - \Phi\left( \frac{\sqrt{n}}{\tau_{(k)}} \left\{ \frac{c_n^{(k)}}{n} - \Delta^{(k)} \right\} \right) .
	\end{align}
	Note that this last application does not need an estimator of $\tau_{(k)}^2$. Instead, $\tau_{(k)}^2$ and $\Delta^{(k)}$ have to be calculated for the particular fixed alternative.
\end{remark}

\section{The Limit null distributions}
\label{limit null}
Our next results for the statistics concern the study of their behavior under the hypothesis. Therefore, we assume, in this section, that $X, X_1, X_2, \dotso \sim \mathcal{N}(0,1)$ are independent random variables. By $\varphi$ we denote the density function of the standard normal law.
\begin{theorem} \label{thm teststat2 H_0 distr}
	There exists a centred Gaussian element $\mathcal{Z}^{(2)}$ of $L^2$ with covariance kernel
	\begin{align*}
		\widetilde{\mathcal{K}}^{(2)} (s,t)
		= \Psi^{(2)}_1(s,t) &+ \Psi^{(2)}_2(s,t) + \Psi^{(2)}_3(s,t) + \Psi^{(2)}_4(s,t) \\
		&+ \Psi^{(2)}_2(t,s) + \Psi^{(2)}_3(t,s) + \Psi^{(2)}_4(t,s) , \quad s,t\in\R,
	\end{align*}
	where
	\begin{align*}
		&\Psi^{(2)}_1(s,t)
		= \varphi^{\prime \prime \prime}(s \wedge t) + (s + t) \varphi^{\prime \prime}(s \wedge t) + (3 + st) \varphi^{\prime}(s \wedge t) + 2(s + t) \varphi (s \wedge t) \\
		& ~~~~~~~~~~~~~~ + (3 + st) \Phi(s \wedge t) + \left( -4 + \tfrac{3}{2} \, st \right) \varphi(s) \varphi(t) - (3 + st) \Phi(s) \Phi(t) , \\
		&\Psi^{(2)}_2(s,t) = 2 \varphi(t) \big[- \varphi^{\prime \prime}(s) - s \varphi^{\prime}(s) \big] , \\
		&\Psi^{(2)}_3(s,t) = \Phi(t) \big[ - \varphi^{\prime \prime \prime}(s) - (s + t) \varphi^{\prime \prime}(s) - (3 + st) \varphi^{\prime}(s) - 2(s + t) \varphi(s) \big] , \\
		&\Psi^{(2)}_4(s,t) = \frac{1}{2} \, t \varphi(t) \big[ \varphi^{\prime \prime \prime}(s) + s \varphi^{\prime \prime}(s) + 2 \varphi^{\prime}(s) \big] ,
	\end{align*}
	with $s \wedge t = \min\{ s, t \}$, such that
	\begin{equation*}
		G_n^{(2)} \stackrel{\mathcal{D}}{\longrightarrow}
		\norm{\mathcal{Z}^{(2)}}_{L^2}^2 ~~ \text{as} ~ n \to \infty .
	\end{equation*}
\end{theorem}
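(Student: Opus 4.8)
The plan is to run the argument of Theorem \ref{thm teststat 1 conv under altern} in reverse gear, exploiting that under $H_0$ the distributional fixed point property forces $F = F^X = \Phi$ and $p = d^X = \varphi$, so that the limiting distance $\Delta^{(2)} = \norm{F^X - \Phi}_{L^2}^2$ vanishes. Starting from the representation (\ref{teststat 2 after change of var}) and abbreviating $\widetilde{s} = (s - \overline{X}_n)/S_n$, the integral decomposition of \cite{Chap:1958} used in (\ref{decomp Chapman}) has a linear (cross) term carrying the factor $F^X - \Phi \equiv 0$ and a constant term equal to $\Delta^{(2)} = 0$; hence only the pure quadratic term survives and
\[
G_n^{(2)} = \frac{1}{S_n}\int_{\R}\big(\sqrt{n}\,U_n^{(2)}(s)\big)^2\,\omega(\widetilde{s})\,\mathrm{d}s,\qquad U_n^{(2)}(s) = \widehat{F}_n^X(s) - \Phi(\widetilde{s}).
\]
The task thus reduces to identifying the weak limit of the random element $\sqrt{n}\,U_n^{(2)}$ of $L^2$ and then applying the continuous mapping theorem to its squared norm.

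First I would establish the asymptotic linear representation exactly as in (\ref{approx of F_n^X}) and in the Taylor expansion (\ref{taylor exp F}) of the comparison term, now carried out with $\Phi$ and $\varphi$ in place of $F$ and $p$. This yields $\sqrt{n}\,U_n^{(2)} \approx S_n^{-2}\,n^{-1/2}\sum_{j=1}^n \widetilde{W}_j$, where, by the fixed point relations $F^X = \Phi$ and $d^X = \varphi$, the summand is precisely the kernel $C^{(2)}$ of Corollary \ref{coro teststat 2 conv under altern} evaluated under normality,
\[
\widetilde{W}_j(s) = X_j(X_j-s)\mathds{1}\{X_j \le s\} - X_j\,\E\big[(X-s)\mathds{1}\{X \le s\}\big] - X_j^2\,\Phi(s) - \big(\tfrac{1}{2}(1-X_j^2)s - X_j\big)\varphi(s).
\]
The $\widetilde{W}_j$ are iid.\ centred random elements of $L^2$ with $\E\norm{\widetilde{W}_1}_{L^2}^2 < \infty$ (here $\E X^4 < \infty$ holds automatically and (\ref{requirements weight function}) controls the weight), so the central limit theorem for separable Hilbert spaces (\cite{LT:1991}, Corollary 10.9) gives $n^{-1/2}\sum_{j=1}^n \widetilde{W}_j \stackrel{\mathcal{D}}{\longrightarrow}\mathcal{Z}^{(2)}$, a centred Gaussian element with covariance kernel $\widetilde{\mathcal{K}}^{(2)}(s,t) = \E[\widetilde{W}_1(s)\widetilde{W}_1(t)]$. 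Since $S_n \to 1$ almost surely, (\ref{unif conv weight function}) absorbs the weight shift and, together with the continuity of the functional $f \mapsto \norm{f}_{L^2}^2$ and Slutzki's Lemma, one obtains $G_n^{(2)} = S_n^{-5}\,\norm{n^{-1/2}\sum_{j=1}^n \widetilde{W}_j}_{L^2}^2 + o_{\mathbb{P}}(1) \stackrel{\mathcal{D}}{\longrightarrow}\norm{\mathcal{Z}^{(2)}}_{L^2}^2$.

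What remains, and where the real work lies, is to verify that $\E[\widetilde{W}_1(s)\widetilde{W}_1(t)]$ equals the stated kernel $\widetilde{\mathcal{K}}^{(2)}(s,t)$. I would expand the product $\widetilde{W}_1(s)\widetilde{W}_1(t)$ into its collection of cross terms, merge indicator factors via $\mathds{1}\{X \le s\}\mathds{1}\{X \le t\} = \mathds{1}\{X \le s\wedge t\}$, and reduce every resulting expectation to truncated Gaussian moments $\int_{-\infty}^{s\wedge t} x^k\varphi(x)\,\mathrm{d}x$ and $\int_{-\infty}^{s} x^k\varphi(x)\,\mathrm{d}x$. Repeated use of the Hermite-type identities $\varphi'(x) = -x\varphi(x)$, $\varphi''(x) = (x^2-1)\varphi(x)$ and $\varphi'''(x) = (3x - x^3)\varphi(x)$, together with integration by parts, turns these moments into the polynomial-in-$(s,t)$ combinations of $\Phi$, $\varphi$, $\varphi'$, $\varphi''$ and $\varphi'''$ that make up $\Psi^{(2)}_1,\dots,\Psi^{(2)}_4$; collecting the genuinely symmetric contributions into $\Psi^{(2)}_1$ and symmetrizing the three remaining blocks then reproduces the displayed form of $\widetilde{\mathcal{K}}^{(2)}$.

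The anticipated obstacle is thus not conceptual but one of bookkeeping: there are many cross terms, several of which only combine into the compact expressions above after the integration-by-parts reductions, so the difficulty is to track signs carefully and to organize the algebra so that the symmetric block $\Psi^{(2)}_1$ and the asymmetric blocks $\Psi^{(2)}_2,\Psi^{(2)}_3,\Psi^{(2)}_4$ emerge cleanly, rather than in establishing any new analytic fact. The weak convergence machinery (linearization, Hilbert-space CLT, continuous mapping) is essentially inherited verbatim from the fixed-alternative analysis.
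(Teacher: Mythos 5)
Your proposal is correct and follows essentially the same route as the paper: starting from (\ref{teststat 2 after change of var}), linearizing $\sqrt{n}\,V_n$ via (\ref{approx of F_n^X}) and a Taylor expansion of $\Phi\left(\frac{s-\overline{X}_n}{S_n}\right)$ into the same iid summands $Z_j$ (your $\widetilde{W}_j$), and then applying the Hilbert-space CLT, Slutzki's Lemma and the continuous mapping theorem. The only difference is cosmetic — you frame the reduction through the Chapman decomposition with the vanishing cross term, whereas the paper works with $V_n$ directly, and you spell out the kernel verification that the paper merely asserts.
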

\begin{proof}
	The proof follows the same scheme as the proof of Theorem $\ref{thm teststat 1 conv under altern}$. Starting with ($\ref{teststat 2 after change of var}$) we have
	\begin{equation*}
		G_n^{(2)} = \frac{n}{S_n} \int_{\R} V_n(s)^2 \, \omega \left( \frac{s - \overline{X}_n}{S_n} \right) \mathrm{d}s ,
	\end{equation*}
	where
	\begin{equation*}
		V_n(s) = \widehat{F}_n^X (s) - \Phi\left( \frac{s - \overline{X}_n}{S_n} \right) .
	\end{equation*}
	Using $\sup_{s \, \in \, \R} \left| \varphi^{\prime}(s) \right| < 1$, a Taylor expansion similar to ($\ref{taylor exp F}$) gives
	\begin{equation*}
		\sqrt{n} \, \Phi \left( \frac{s - \overline{X}_n}{S_n} \right)
		\approx \sqrt{n} \left\{ \Phi(s) + \varphi(s) \left( s \cdot \big( \frac{1}{S_n} - 1 \big) - \frac{\overline{X}_n}{S_n} \right) \right\} .
	\end{equation*}
	Plugging the expression on the right hand side, together with approximation ($\ref{approx of F_n^X}$) of $\sqrt{n} \, \widehat{F}_n^X$, into the definition of $V_n$, we get
	\begin{align*}
		\sqrt{n} \, V_n(s)
		\approx \frac{1}{S_n^2} \, \sqrt{n} &\left\{ \frac{1}{n} \sum\limits_{j=1}^{n} X_j (X_j - s) \mathds{1}\{ X_j \leq s \} - \overline{X}_n \, \E \big[ (X - s) \mathds{1}\{X \leq s\} \big] \right. \\
		&\left. ~~\, - S_n^2 \, \Phi(s) - S_n \, \varphi(s) \big( s\cdot (1 - S_n) - \overline{X}_n \big) \vphantom{\sum\limits_{j}^{n}}\right\} .
	\end{align*}
	Since (see the proof of Theorem $\ref{thm teststat 1 conv under altern}$)
	\begin{equation*}
		\sqrt{n} \, S_n^2 \, \Phi(s)
		\approx \frac{1}{\sqrt{n}} \sum\limits_{j=1}^{n} X_j^2 \, \Phi(s)
	\end{equation*}
	and
	\begin{equation*}
		\sqrt{n} \, (1 - S_n) \approx \frac{1}{\sqrt{n}} \sum\limits_{j=1}^{n} \frac{1}{2} \big( 1 - X_j^2 \big) ,
	\end{equation*}
	we obtain
	\begin{equation} \label{asymp repr V_n}
		\sqrt{n} \, V_n(s) \approx \frac{1}{S_n^2} \, \frac{1}{\sqrt{n}} \sum\limits_{j=1}^{n} Z_j(s),
	\end{equation}
	where
	\begin{align*}
		Z_j(s) = & \, X_j (X_j - s) \mathds{1}\{ X_j \leq s \} - X_j \, \E \big[ (X - s) \mathds{1}\{ X \leq s \} \big] \\
		& - X_j^2 \, \Phi(s) - \left( \tfrac{1}{2} \, (1 - X_j^2) \cdot s - X_j \right) \varphi(s) .
	\end{align*}
	Notice that $Z_1, \dots, Z_n$ are iid. random elements of $L^2$ with $\E Z_1 = 0$ ($F^X = \Phi$ under $H_0$). The central limit theorem for separable Hilbert spaces provides the existence of a centred Gaussian element $\mathcal{Z}^{(2)} \in L^2$ with
	\begin{equation*}
		\frac{1}{\sqrt{n}} \sum\limits_{j=1}^{n} Z_j(\cdot)
		\stackrel{\mathcal{D}}{\longrightarrow} \mathcal{Z}^{(2)} (\cdot) .
	\end{equation*}
	By ($\ref{asymp repr V_n}$), $V_n$ is bounded in probability and with ($\ref{unif conv weight function}$), ($\ref{teststat 2 after change of var}$) reads as
	\begin{equation*}
		G_n^{(2)}
		= \norm{\sqrt{n} \, V_n}_{L^2}^2 + o_{\mathbb{P}}(1) .
	\end{equation*}
	Hence, the continuous mapping theorem and Slutzki's Lemma imply
	\begin{equation*}
		G_n^{(2)} \stackrel{\mathcal{D}}{\longrightarrow} \norm{\mathcal{Z}^{(2)}}_{L^2}^2 .
	\end{equation*}
	Since the function $\widetilde{\mathcal{K}}^{(2)}$ defined in the statement of the theorem satisfies $\widetilde{\mathcal{K}}^{(2)}(s,t) = \E[ Z_1(s) \, Z_1(t) ]$ and thus, is the covariance kernel of $\mathcal{Z}^{(2)}$, we are done.
\end{proof}

For $G_n^{(1)}$, the limit distribution under the hypothesis can be obtained in a similar manner. Starting with ($\ref{teststat 1 after change of var}$), we do not need to consider a Taylor expansion, as the term replacing $\Phi$ already has the required 'sum of iid. variables' form. Therefore, the proof is less involved and, following the standard procedure ($\ref{approx of F_n^X}$) for $\hat{F}_n^X$ and applying
\begin{equation*}
	\sqrt{n} \big( 1 - S_n^2 \big) \Phi(s)
	\approx \frac{1}{\sqrt{n}} \sum\limits_{j=1}^{n} \big( 1 - X_j^2 \big) \Phi(s) ,
\end{equation*}
the reasoning closely parallels that of the proof of Theorem $\ref{thm teststat2 H_0 distr}$.
\begin{corollary} \label{coro teststat1 H_0 distr}
	There exists a centred Gaussian element $\mathcal{Z}^{(1)}$ of $L^2$ with covariance kernel
	\begin{equation*}
		\widetilde{\mathcal{K}}^{(1)} (s,t)
		= \Psi^{(1)}_1(s,t) + \Psi^{(1)}_2(s,t) + \Psi^{(1)}_2(t,s) + \Psi^{(1)}_3(s,t) + \Psi^{(1)}_3(t,s) ,\quad s,t\in\R,
	\end{equation*}
	where
	\begin{align*}
		&\Psi^{(1)}_1(s,t)
		= \varphi^{\prime \prime \prime}(s \wedge t) + (s + t) \varphi^{\prime \prime}(s \wedge t) + (1 + st) \varphi^{\prime}(s \wedge t) + (s + t) \varphi (s \wedge t) \\
		& ~~~~~~~~~~~~~~ + (2 + st) \Phi(s \wedge t) - \varphi(s) \varphi(t) - (2 + st) \Phi(s) \Phi(t) , \\
		&\Psi^{(1)}_2(s,t) = \varphi(t) \big[- \varphi^{\prime \prime}(s) - s \varphi^{\prime}(s) \big] , \\
		&\Psi^{(1)}_3(s,t) = \Phi(t) \big[ - \varphi^{\prime \prime \prime}(s) - (s + t) \varphi^{\prime \prime}(s) - (1 + st) \varphi^{\prime}(s) - (s + t) \varphi(s) \big]
	\end{align*}
	such that
	\begin{equation*}
		G_n^{(1)} \stackrel{\mathcal{D}}{\longrightarrow}
		\norm{\mathcal{Z}^{(1)}}_{L^2}^2 ~~ \text{as} ~ n \to \infty .
	\end{equation*}
\end{corollary}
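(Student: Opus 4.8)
The plan is to follow the scheme of the proof of Theorem $\ref{thm teststat2 H_0 distr}$, exploiting the simplification already announced in the text: the term replacing $\Phi$ in ($\ref{teststat 1 after change of var}$) is the empirical distribution function $n^{-1} \sum_{j=1}^{n} \mathds{1}\{X_j \leq s\}$, which is already an average of iid. indicators and therefore needs no Taylor expansion. Writing
\[
	G_n^{(1)} = \frac{n}{S_n} \int_{\R} U_n(s)^2 \, \omega\!\left( \frac{s - \overline{X}_n}{S_n} \right) \mathrm{d}s, \qquad U_n(s) = \widehat{F}_n^X(s) - \frac{1}{n} \sum_{j=1}^{n} \mathds{1}\{X_j \leq s\},
\]
I would first derive an asymptotic representation of $\sqrt{n}\, U_n$ as a normalized sum of iid. elements of $L^2$, and then apply the central limit theorem in Hilbert spaces as before.

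To obtain the representation I would insert approximation ($\ref{approx of F_n^X}$) for $\sqrt{n}\, \widehat{F}_n^X$ and handle the empirical distribution function by extracting the factor $S_n^{-2}$ and splitting $S_n^2 = 1 + (S_n^2 - 1)$; the classical Glivenko--Cantelli theorem and the displayed approximation $\sqrt{n}\,(1 - S_n^2)\,\Phi(s) \approx n^{-1/2} \sum_{j=1}^{n} (1 - X_j^2)\,\Phi(s)$ turn the resulting correction into a further iid. sum. Collecting the terms should yield $\sqrt{n}\, U_n(s) \approx S_n^{-2}\, n^{-1/2} \sum_{j=1}^{n} Z_j^{(1)}(s)$, where
\begin{align*}
	Z_j^{(1)}(s) = \; & X_j (X_j - s) \mathds{1}\{X_j \leq s\} - X_j \, \E\big[ (X - s)\mathds{1}\{X \leq s\} \big] \\
	& - X_j^2 \, \Phi(s) - \mathds{1}\{X_j \leq s\} + \Phi(s) .
\end{align*}
As a consistency check, this is exactly the $H_0$-specialization of the summand $W_j$ in the proof of Theorem $\ref{thm teststat 1 conv under altern}$, since there $F = F^X = \Phi$ and $d^X = p = \varphi$, so the term carrying $d^X - p$ drops out. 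Using Proposition $\ref{Prop existence zero bias}$ one verifies $\E Z_1^{(1)} = 0$ (because $F^X = \Phi$ under $H_0$), while ($\ref{requirements weight function}$) guarantees $Z_1^{(1)} \in L^2$ with $\E \norm{Z_1^{(1)}}_{L^2}^2 < \infty$.

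The central limit theorem for separable Hilbert spaces then provides a centred Gaussian element $\mathcal{Z}^{(1)} \in L^2$ with $n^{-1/2} \sum_{j=1}^{n} Z_j^{(1)} \stackrel{\mathcal{D}}{\longrightarrow} \mathcal{Z}^{(1)}$. Since $S_n \to 1$ almost surely and by ($\ref{unif conv weight function}$), the integral reads as $G_n^{(1)} = \norm{\sqrt{n}\, U_n}_{L^2}^2 + o_{\mathbb{P}}(1)$, so the continuous mapping theorem and Slutzki's Lemma give $G_n^{(1)} \stackrel{\mathcal{D}}{\longrightarrow} \norm{\mathcal{Z}^{(1)}}_{L^2}^2$. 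It is worth noting that this is genuinely the regime of Theorem $\ref{thm teststat2 H_0 distr}$ rather than that of Theorem $\ref{thm teststat 1 conv under altern}$: under $H_0$ the factor $F^X - F = \Phi - \Phi$ vanishes, so the leading linear contribution (and with it the limiting variance $\tau_{(1)}^2$) degenerates, and it is the quadratic form itself that carries the non-degenerate limit, exactly as for classical Cram\'er--von Mises statistics.

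The main work, and the step I expect to be the real obstacle, is to verify that the covariance kernel $\widetilde{\mathcal{K}}^{(1)}(s,t) = \E[Z_1^{(1)}(s)\, Z_1^{(1)}(t)]$ coincides with the expression stated in the corollary. Expanding the product of the two bracketed sums produces a sizeable collection of Gaussian expectations; the genuinely new ones are the cross moments $\E[X^2 (X-s)(X-t)\mathds{1}\{X \leq s \wedge t\}]$, which generate the terms evaluated at $s \wedge t$ in $\Psi^{(1)}_1$, together with mixed products such as $\E[X(X-s)\mathds{1}\{X \leq s \wedge t\}]$ and $\E[X^3(X-t)\mathds{1}\{X \leq t\}]\,\Phi(s)$. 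All of these reduce to truncated Gaussian moments expressible through $\Phi$, $\varphi$ and its derivatives by repeated integration by parts, using $\varphi'(x) = -x\varphi(x)$, $\varphi''(x) = (x^2 - 1)\varphi(x)$ and $\varphi'''(x) = (3x - x^3)\varphi(x)$; the asymmetric contributions are then grouped into $\Psi^{(1)}_2$, $\Psi^{(1)}_3$ and their $(t,s)$-transposes, just as in the proof of Theorem $\ref{thm teststat2 H_0 distr}$. This bookkeeping is routine but lengthy, and tracking the cancellations that collapse the many truncated moments into the compact $\varphi^{(k)}(s \wedge t)$ form is where the care is required.
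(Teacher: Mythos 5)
Your proposal is correct and follows essentially the same route as the paper: the authors prove Corollary \ref{coro teststat1 H_0 distr} exactly by the sketch you elaborate, namely starting from (\ref{teststat 1 after change of var}), noting that the empirical distribution function needs no Taylor expansion, inserting (\ref{approx of F_n^X}) together with $\sqrt{n}(1-S_n^2)\Phi(s) \approx n^{-1/2}\sum_{j=1}^n(1-X_j^2)\Phi(s)$, and then invoking the Hilbert-space CLT as in Theorem \ref{thm teststat2 H_0 distr}. Your summand $Z_j^{(1)}$ is indeed the $H_0$-specialization of the $W_j$ from Theorem \ref{thm teststat 1 conv under altern}, and your treatment of the covariance-kernel verification matches the level of detail the paper itself provides.
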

\begin{remark}
	The distribution of $\norm{\mathcal{Z}^{(k)}}_{L^2}^2$, $k = 1, 2$, that is, the limit distribution of $G_n^{(k)}$ under the hypothesis, is that of $\sum_{k=1}^{\infty} \lambda_j^{(k)} N_j^2$. Here, the random variables $N_1, N_2, \dotso \sim \mathcal{N}(0,1)$ are independent, and $\lambda_1^{(k)}, \lambda_2^{(k)}, \dotso$ are the non-zero eigenvalues of the operator
	\begin{equation*}
		L^2 \to L^2, ~~~ f \longmapsto \int_{\R} \widetilde{\mathcal{K}}^{(k)} (\cdot, t) \, f(t) \, \omega(t) \, \mathrm{d}t ,
	\end{equation*}
	$k = 1,2$. Considering the complexity of $\widetilde{\mathcal{K}}^{(k)}$, it does not seem possible to determine $\lambda_j^{(k)}$ (for any weight function) explicitly. Thus, in practice, critical values are obtained by simulation rather then by using asymptotic results. An alternative approach to gain theoretically justified (approximate) critical values is to calculate the first four moments of the limit null distribution and fit a representative of the Pearson- or Johnson- family of distributions to those moments (see \cite{Hen:1990} for the $BHEP$ test). But since we do not face any complications in computing the critical values, we will only pursue the empirical approach.
\end{remark}

\section{Contiguous alternatives}
\label{contiguous alternatives}
Adjusting the argumentation of \cite{HW:1997}, we will derive non-degenerate limit distributions for our statistics under contiguous alternatives converging to the normal distribution at rate $n^{-1/2}$. To this end, we introduce a triangular array of row-wise iid. random variables $X_{n,1}, \dots, X_{n,n}$, $n \in \N$, with Lebesgue density
\begin{equation*}
	p_n (x) = \varphi (x) \cdot \big( 1 + \tfrac{1}{\sqrt{n}} \, c(x) \big),\,x\in\R.
\end{equation*}
Here, $\varphi(x) = \tfrac{1}{\sqrt{2 \pi}} \, e^{-x^2 /2}$, $x\in\R$, and $c : \R \to \R$ is a measurable, bounded function satisfying
\begin{equation*}
	\int_{\R} c(x) \, \varphi(x) \, \mathrm{d}x = 0 .
\end{equation*}
Notice that, by the boundedness of $c$, we may assume $n$ to be large enough to ensure $p_n \geq 0$. First, we set
\begin{equation*}
	\mu_n = \bigotimes\limits_{j=1}^{n} (\varphi \, \mathcal{L}^1) , ~~~
	\nu_n = \bigotimes\limits_{j=1}^{n} (p_n \, \mathcal{L}^1) ,
\end{equation*}
which are measures on $(\R^n, \mathcal{B}^n)$, with $\mathcal{B}^n$ being the Borel-$\sigma$-algebra of $\R^n$. Apparently, $\nu_n$ is absolutely continuous with respect to $\mu_n$ and we can look upon the Radon-Nikodym derivative $L_n = \frac{\mathrm{d}\nu_n}{\mathrm{d}\mu_n}$. We observe that, by a Taylor expansion,
\begin{align*}
	\log \big( L_n (X_{n,1}, \dots, X_{n,n}) \big)
	&= \sum\limits_{j=1}^{n} \log \left( 1 + \frac{1}{\sqrt{n}} \, c(X_{n,j}) \right) \\
	&= \sum\limits_{j=1}^{n} \left( \frac{1}{\sqrt{n}} \, c(X_{n,j}) - \frac{1}{2n} \, c(X_{n,j})^2 \right) + o_{\mathbb{P}}(1) ,
\end{align*}
whenever $(X_{n,1}, \dots, X_{n,n}) \sim \mu_n$. Therefore, viewing $L_n$ as a random element $(\R^n, \mathcal{B}^n, \mu_n) \to (\R, \mathcal{B}^1)$, the Lindeberg-Feller central limit theorem and the (weak) law of large numbers give
\begin{equation*}
	\log \big( L_n \big) \xlongrightarrow{\mathcal{D}_{\mu_n}}
	\mathcal{N} \left( - \frac{\tau^2}{2} , \tau^2 \right) ,
\end{equation*}
where
\begin{equation*}
	\tau^2 = \int_{\R} c(x)^2 \, \varphi(x) \, \mathrm{d}x ~ (< \infty)
\end{equation*}
and $\xlongrightarrow{\mathcal{D}_{\mu_n}}$ denotes convergences in distribution under $\mu_n$. By LeCam's first Lemma (see for instance \cite{HSS:1999}, p.253, Corollary 1), $\nu_n$ is contiguous to $\mu_n$. \\
Now, recall from the proof of Theorem $\ref{thm teststat2 H_0 distr}$
\begin{equation*}
	G_n^{(2)} = \frac{n}{S_n} \int_{\R} V_n(s)^2 \, \omega \left( \frac{s - \overline{X}_n}{S_n} \right) \mathrm{d}s ,
\end{equation*}
where
\begin{equation*}
	V_n(s) = \widehat{F}_n^X (s) - \Phi\left( \frac{s - \overline{X}_n}{S_n} \right) .
\end{equation*}
Also notice that, when interpreting $V_n : \R^n \to L^2$, ($\ref{asymp repr V_n}$) yields
\begin{equation*}
	\norm{\sqrt{n} \, V_n - Z_n^*}_{L^2}^2 = o_{\mu_n}(1) ,
\end{equation*}
where $Z_n^*(s) = \tfrac{1}{\sqrt{n}} \sum_{j=1}^{n} Z_{n,j}(s)$ and
\begin{align*}
	Z_{n,j}(s) = & \, X_{n,j} (X_{n,j} - s) \mathds{1}\{ X_{n,j} \leq s \} - X_{n,j} \, \E \big[ (X_{n,1} - s) \mathds{1}\{ X_{n,1} \leq s \} \big] \\
	& - X_{n,j}^2 \, \Phi(s) - \left( \tfrac{1}{2} \, (1 - X_{n,j}^2) \cdot s - X_{n,j} \right) \varphi(s) .
\end{align*}
Thus, by contiguity,
\begin{equation} \label{asymp equiv under nu_n}
	\norm{\sqrt{n} \, V_n - Z_n^*}_{L^2}^2 = o_{\nu_n}(1) .
\end{equation}
Defining
\begin{equation*}
	\eta (x, s) = x(x - s)\mathds{1}\{ x \leq s \} - \varphi(s) \left( \tfrac{1}{2} (1 - x^2) s - 2x \right) - \Phi(s) (x^2 - xs)
\end{equation*}
and using the boundedness of $c$ we get, under $\mu_n$,
\begin{equation*}
	\lim\limits_{n \, \to \, \infty} \mathrm{Cov} \left( Z_{n,1}(s), c(X_{n,1}) -  \tfrac{1}{2\sqrt{n}} \, c(X_{n,1})^2 \right)
	= \int_{\R} \eta(x,s) \, c(x) \, \varphi(x) \, \mathrm{d}x
	= \zeta(s) ,
\end{equation*}
with $\zeta \in L^2$. Consequently, for any $k \in \N$, $v \in \R^k$ and $s_1, \dots, s_k \in \R$, the multivariate Lindeberg-Feller-type central limit theorem implies,
\begin{align*}
	\frac{1}{\sqrt{n}} \sum\limits_{j=1}^{n} & \left\{
	\left(\begin{array}{c} v_1 \, Z_{n,j}(s_1) + \dots + v_k \, Z_{n,j}(s_k) \\ c(X_{n,j}) - \tfrac{1}{2\sqrt{n}} \, c(X_{n,j})^2 \end{array}\right)
	- \left(\begin{array}{c} 0 \\ - \tfrac{\tau^2}{2 \sqrt{n}} \end{array}\right)
	\right\} \\
	&~~~~~~~~~~~~~~~~~~~~~~~~~ \xlongrightarrow{\mathcal{D}_{\mu_n}}
	\mathcal{N}_{2} \left(
	\boldsymbol{0} ,
	\left(\begin{array}{rr} v^\top \Sigma v \, & \, v^\top \zeta_k \\
		\zeta_k^\top v & \tau^2 \end{array}\right)
	\right).
\end{align*}
Here, $\Sigma = \big( \widetilde{\mathcal{K}}^{(2)}(s_i, s_j) \big)_{1 \leq i,j \leq k}$, with $\widetilde{\mathcal{K}}^{(2)}$ the covariance kernel of $\mathcal{Z}^{(2)}$ from Theorem $\ref{thm teststat2 H_0 distr}$, and $\zeta_k = \big( \zeta(s_1), \dots, \zeta(s_k) \big)^\top$. Therefore,
\begin{equation*}
	\left(\begin{array}{c} v_1 \, Z_n^*(s_1) + \dots + v_k \, Z_n^*(s_k) \\ \log (L_n) \end{array}\right)
	\xlongrightarrow{\mathcal{D}_{\mu_n}}
	\mathcal{N}_{2} \left(
	\left(\begin{array}{c} 0 \\ - \tfrac{\tau^2}{2} \end{array}\right) ,
	\left(\begin{array}{rr} v^\top \Sigma v \, & \, v^\top \zeta_k \\
		\zeta_k^\top v & \tau^2 \end{array}\right)
	\right)
\end{equation*}
and LeCam's third Lemma (see \cite{HSS:1999}, p.259, Lemma 2) implies
\begin{equation} \label{fidi conv}
	Z_n^* \xlongrightarrow{\mathcal{D}_{\nu_n, fidi}}
	\mathcal{Z}^{(2)} + \zeta ,
\end{equation}
where $\xlongrightarrow{\mathcal{D}_{\nu_n, fidi}}$ denotes convergence of the finite-dimensional distributions (under $\nu_n$). We realize that in the proof of Theorem $\ref{thm teststat2 H_0 distr}$ we have shown that $Z_n^* \xlongrightarrow{\mathcal{D}_{\mu_n}} \mathcal{Z}^{(2)}$ which entails the tightness of $Z_n^*$ under $\mu_n$. By contiguity, $Z_n^*$ remains tight under $\nu_n$. Together with ($\ref{fidi conv}$) this demonstrates
\begin{equation*}
	Z_n^* \xlongrightarrow{\mathcal{D}_{\nu_n}}
	\mathcal{Z}^{(2)} + \zeta .
\end{equation*}
Noting that, with ($\ref{unif conv weight function}$), the stochastic boundedness of $Z_n^*$ (under $\nu_n$) and ($\ref{asymp equiv under nu_n}$), $G_n^{(2)}$ reads as
\begin{equation*}
	G_n^{(2)} = \norm{Z_n^*}_{L^2}^2 + o_{\nu_n}(1) .
\end{equation*}
Thus, we have shown
\begin{theorem}
	Under the triangular array $X_{n,1} , \dots, X_{n,n}$, with $X_{n,1} \sim p_n  \mathcal{L}^1$, we have
	\begin{equation*}
		G_n^{(2)} \stackrel{\mathcal{D}}{\longrightarrow}
		\norm{\mathcal{Z}^{(2)} + \zeta}_{L^2}^2 .
	\end{equation*}
\end{theorem}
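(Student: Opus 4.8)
The plan is to harvest the groundwork laid in the preceding paragraphs, where both the relevant asymptotic representation of the statistic and the weak convergence of the governing process under the alternative sequence have already been secured; all that remains is to transport these through a continuous functional. Concretely, I would rely on the two facts established just above the statement: first, the approximation
\[
	G_n^{(2)} = \norm{Z_n^*}_{L^2}^2 + o_{\nu_n}(1),
\]
and second, the convergence in distribution $Z_n^* \xlongrightarrow{\mathcal{D}_{\nu_n}} \mathcal{Z}^{(2)} + \zeta$ of random elements of $L^2$ under $\nu_n$.

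First I would recall how these two ingredients come about, since they carry the substance of the argument. The representation of $G_n^{(2)}$ follows from (\ref{teststat 2 after change of var}), the asymptotic equivalence (\ref{asymp equiv under nu_n}) transferred from $\mu_n$ to $\nu_n$ by contiguity, and (\ref{unif conv weight function}) to replace the shifted weight. The weak convergence of $Z_n^*$ is the more delicate piece: its finite-dimensional distributions converge to those of $\mathcal{Z}^{(2)} + \zeta$ by LeCam's third Lemma applied to the joint limit of $\big(Z_n^*, \log L_n\big)$ under $\mu_n$, with the shift $\zeta$ produced by the covariance computation recorded above, while the tightness under $\nu_n$ is inherited from the tightness under $\mu_n$ (a by-product of Theorem \ref{thm teststat2 H_0 distr}) once more via contiguity.

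Granting these, the proof itself is short. The map $L^2 \to \R$, $f \mapsto \norm{f}_{L^2}^2$, is continuous, so the continuous mapping theorem yields
\[
	\norm{Z_n^*}_{L^2}^2 \xlongrightarrow{\mathcal{D}_{\nu_n}} \norm{\mathcal{Z}^{(2)} + \zeta}_{L^2}^2,
\]
and Slutzki's Lemma then absorbs the $o_{\nu_n}(1)$ remainder, giving
\[
	G_n^{(2)} \stackrel{\mathcal{D}}{\longrightarrow} \norm{\mathcal{Z}^{(2)} + \zeta}_{L^2}^2
\]
under the triangular array, which is the assertion.

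The hard part will not be this final functional step but the verification feeding into the $L^2$-convergence of $Z_n^*$ under $\nu_n$: ensuring that tightness genuinely transfers across the change of measure. This is precisely where contiguity is indispensable, since finite-dimensional convergence alone does not upgrade to convergence as a process, and the tightness available under the null must be carried over to the contiguous alternative. Everything else reduces to the covariance bookkeeping that identifies $\zeta$ and to the already-available Hilbert-space machinery.
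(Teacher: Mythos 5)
Your proposal is correct and follows essentially the same route as the paper: the representation $G_n^{(2)} = \norm{Z_n^*}_{L^2}^2 + o_{\nu_n}(1)$ via contiguity, finite-dimensional convergence to $\mathcal{Z}^{(2)} + \zeta$ by LeCam's third Lemma, tightness transferred from $\mu_n$ to $\nu_n$ by contiguity, and the continuous mapping theorem with Slutzki's Lemma to conclude. Nothing is missing.
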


Since Corollary $\ref{coro teststat1 H_0 distr}$ is obtained with the same line of argument used in Theorem $\ref{thm teststat2 H_0 distr}$ it is evident that we can, likewise, conclude
\begin{equation*}
	G_n^{(1)} \stackrel{\mathcal{D}}{\longrightarrow}
	\norm{\mathcal{Z}^{(1)} + \widetilde{\zeta}}_{L^2}^2,
\end{equation*}
where $\widetilde{\zeta} (s) = \int \widetilde{\eta}(s,x) \, c(x) \, \varphi(x) \, \mathrm{d}x$ and $\widetilde{\eta}(s,x) = (x(x - s) - 1) \mathds{1}\{ x \leq s \} + x \varphi(s) + (1 + sx - x^2) \Phi(s)$. \\
From these statements we discern that tests based on any of our statistics are able to detect contiguous alternatives which converge, at rate $n^{-1/2}$, to the class of normal distributions. For further insights on contiguity, we refer to \cite{Rou:1972} and \cite{Sen:1981}.

\section{Empirical Results}
\label{Section emp results}
For the implementation of our tests, we need to specify the weight function $\omega$. We propose families of test statistics $\left(G_{n,a}^{(k)}: a>0\right)$, $k=1,2$, by considering the density function of a centred normal distribution
\begin{equation*}
	\omega_a (t) = \frac{1}{\sqrt{2 \pi a}} \, e^{- \frac{t^2}{2 a}},\quad t\in\R,
\end{equation*}
where the variance is chosen to be some tuning parameter $a > 0$. Apparently, the continuous function $\omega_a > 0$ satisfies ($\ref{requirements weight function}$) and is, therefore, an admissible weight function in our previous considerations. Note that a weight function of this type has also been employed in \cite{HZ:1990} for the class of $BHEP$ tests. For this explicit function, our statistics have the expressions
\begin{align*}
	G_{n,a}^{(1)}
	= & ~ \frac{2}{n} \sum\limits_{1 \leq j < k \leq n} \left\{ \vphantom{\exp\left(- \tfrac{Y_{(k)}^2}{2 a}\right)} \left( 1 - \Phi\left( \tfrac{Y_{(k)}}{\sqrt{a}} \right) \right) \big( (Y_{(j)}^2 - 1)(Y_{(k)}^2 - 1) + a Y_{(j)} Y_{(k)} \big) \right. \\
	&\left. ~~~~~~~~~~~~~~\, + \frac{a}{\sqrt{2 \pi a}} \, \exp\left(- \tfrac{Y_{(k)}^2}{2 a}\right) \big( - Y_{(j)}^2 Y_{(k)} + Y_{(k)} + Y_{(j)} \big)
	\right\} \\
	&+ \frac{1}{n} \sum\limits_{j=1}^{n} \left\{ \vphantom{\frac{a}{\sqrt{2 \pi a}}} \left( 1 - \Phi\left(\tfrac{Y_{(j)}}{\sqrt{a}}\right) \right) \big( Y_{(j)}^4 + (a - 2) Y_{(j)}^2 + 1 \big) \right. \\
	&\left. ~~~~~~~~~~~ + \frac{a}{\sqrt{2 \pi a}} \, \exp\left(- \tfrac{Y_{(j)}^2}{2 a}\right) \big( 2 Y_{(j)} - Y_{(j)}^3 \big) \right\}
\end{align*}
and
\begin{align*}
	G_{n,a}^{(2)}
	= & ~ \frac{2}{n} \sum\limits_{1 \leq j < k \leq n} \left\{ Y_{(j)} Y_{(k)} \left[ \big( Y_{(j)} Y_{(k)} + a \big) \left( 1 - \Phi\left( \tfrac{Y_{(k)}}{\sqrt{a}} \right) \right) - a Y_{(j)} \, \omega_a(Y_{(k)}) \right] \right\} \\
	&+ \sum\limits_{j=1}^{n} \left\{ \vphantom{\frac{a}{\sqrt{2 \pi (1+a)}} } \frac{Y_{(j)}^2}{n} \left[ (Y_{(j)}^2 + a) \left( 1 - \Phi\left( \tfrac{Y_{(j)}}{\sqrt{a}} \right) \right) - a Y_{(j)} \, \omega_a(Y_{(j)}) \right] \right. \\
	&\left. ~~~~~~~~ - 2 Y_{(j)} \left[ Y_{(j)} \int_{Y_{(j)}}^{\infty} \Phi(t) \, \omega_a (t) \, \mathrm{d}t - a \Phi(Y_{(j)}) \, \omega_a(Y_{(j)}) \right. \right. \\
	& \left. \left. ~~~~~~~~~~~~~~~~\, - \frac{a}{\sqrt{2 \pi (1+a)}} \left( 1 - \Phi\left( \sqrt{\tfrac{1 + a}{a}} \, Y_{(j)} \right) \right) \right] \right\} \\
	&+ n \int_{\R} \Phi(t)^2 \, \omega_a(t) \, \mathrm{d}t ,
\end{align*}
where $Y_{(1)} \leq \dotso \leq Y_{(n)}$ is the ordered sample of the scaled residuals $Y_{n,j}$, see section $\ref{distribution function tests}$. Those expressions make the statistics amenable to computations. We immediately see that the integral figuring in the second sum of $G_{n,a}^{(2)}$, although being accessible to stable numerical integration, is a slight drawback in terms of calculation time as compared to $G_{n,a}^{(1)}$.

The rest of the section is organized as follows. First, we present empirical critical values for a range of possible tuning parameters, sample sizes and significance levels. Then, a brief summary of the competing tests and an overview of the considered alternatives follows. At last, we display the performance of our tests (in comparison to the established tests) in form of a finite-sample power study. Notice that there are several comparative simulation studies for testing normality in the literature, as witnessed by \cite{BDH:1989,FRS:2006,LL:1992,PDB:1977,RDC:2010,SWC:1968,YS:2011} and others. All simulations are performed using the statistical computing environment {\tt R}, see \cite{R:2017}.

The empirical critical values were obtained by a Monte Carlo simulation with 100~000 repetitions and can be found in Table \ref{emp.val.G1} and Table \ref{emp.val.G2} for sample sizes $n\in\{20,50,100,200,500\}$, tuning parameters $a\in\{0.1,0.25,0.5,1,1.5,2,3\}$ and significance levels $1-\alpha\in\{0.9,0.95,0.99\}$.
\begin{table}[t]
\tiny
	\centering
	\begin{tabular}{c|r|rrrrrrr}
		\hline
		$\alpha$ & $n\backslash a$ & 0.1 & 0.25 & 0.5 & 1 & 1.5 & 2 & 3 \\
		\hline
		\multirow{4}{*}{0.01} & 20 &  1.692 & 1.452 & 1.226 & 0.993 & 0.862 & 0.772 & 0.653 \\
		& 50 & 1.766 & 1.509 & 1.281 & 1.038 & 0.897 & 0.806 & 0.684 \\
		& 100 & 1.784 & 1.538 & 1.303 & 1.052 & 0.914 & 0.820 & 0.694 \\
		& 200 & 1.802 & 1.549 & 1.311 & 1.058 & 0.914 & 0.817 & 0.691 \\
		& 500 & 1.820 & 1.569 & 1.332 & 1.075 & 0.928 & 0.830 & 0.701 \\
		\hline
		\multirow{4}{*}{0.05} &$20$ & 1.027 & 0.891 & 0.754 & 0.609 & 0.525 & 0.469 & 0.396 \\
		& 50 & 1.066 & 0.928 & 0.789 & 0.642 & 0.555 & 0.497 & 0.421 \\
		& 100 & 1.062 & 0.927 & 0.791 & 0.643 & 0.558 & 0.499 & 0.423 \\
		& 200 & 1.067 & 0.929 & 0.795 & 0.649 & 0.563 & 0.504 & 0.427 \\
		& 500 & 1.068 & 0.935 & 0.802 & 0.653 & 0.566 & 0.507 & 0.429 \\
		\hline
		\multirow{4}{*}{0.1} & $20$ &  0.750 & 0.655 & 0.560 & 0.453 & 0.392 & 0.351 & 0.296 \\
		& 50 & 0.763 & 0.670 & 0.578 & 0.472 & 0.411 & 0.368 & 0.312 \\
		& 100 & 0.773 & 0.682 & 0.589 & 0.483 & 0.421 & 0.377 & 0.321 \\
		& 200 & 0.764 & 0.676 & 0.585 & 0.482 & 0.420 & 0.377 & 0.321 \\
		& 500 & 0.770 & 0.678 & 0.588 & 0.484 & 0.422 & 0.379 & 0.323 \\
		\hline
		
	\end{tabular}
	\caption{Empirical $1-\alpha$ quantiles for $G_{n,a}^{(1)}$ (100~000 replications)}\label{emp.val.G1}
\end{table}

\begin{table}[t]
\tiny
	\centering
	\begin{tabular}{c|r|rrrrrrr}
		\hline
		$\alpha$ & $n\backslash a$ & 0.1 & 0.25 & 0.5 & 1 & 1.5 & 2 & 3\\
		\hline
		\multirow{4}{*}{0.01} & 20 &  0.665 & 0.645 & 0.608 & 0.545 & 0.496 & 0.458 & 0.400 \\
		& 50 & 0.711 & 0.691 & 0.652 & 0.582 & 0.530 & 0.489 & 0.430 \\
		& 100 & 0.727 & 0.706 & 0.662 & 0.592 & 0.538 & 0.496 & 0.435 \\
		& 200 & 0.736 & 0.715 & 0.672 & 0.598 & 0.540 & 0.496 & 0.433 \\
		& 500 & 0.754 & 0.731 & 0.687 & 0.607 & 0.549 & 0.506 & 0.440 \\
		\hline
		\multirow{4}{*}{0.05} & 20 & 0.391 & 0.380 & 0.357 & 0.315 & 0.284 & 0.261 & 0.227 \\
		& 50 & 0.420 & 0.408 & 0.384 & 0.342 & 0.309 & 0.284 & 0.247 \\
		& 100 & 0.424 & 0.413 & 0.389 & 0.347 & 0.315 & 0.290 & 0.253 \\
		& 200 & 0.427 & 0.414 & 0.391 & 0.350 & 0.317 & 0.292 & 0.255 \\
		& 500 & 0.433 & 0.419 & 0.396 & 0.354 & 0.320 & 0.295 & 0.258 \\
		\hline
		\multirow{4}{*}{0.1} & 20 & 0.279 & 0.271 & 0.255 & 0.225 & 0.202 & 0.185 & 0.160 \\
		& 50 & 0.293 & 0.287 & 0.271 & 0.241 & 0.219 & 0.201 & 0.175 \\
		& 100 & 0.302 & 0.294 & 0.279 & 0.249 & 0.226 & 0.208 & 0.182 \\
		& 200 & 0.302 & 0.295 & 0.279 & 0.251 & 0.228 & 0.210 & 0.183 \\
		& 500 & 0.303 & 0.295 & 0.280 & 0.251 & 0.228 & 0.210 & 0.184 \\
		\hline
	\end{tabular}
	\caption{Empirical $1-\alpha$ quantiles for $G_{n,a}^{(2)}$ (100~000 replications)}\label{emp.val.G2}
\end{table}

We considered the following competitors to the new families of test statistics. As classical and well known tests, we included the Shapiro-Wilk test ($SW$), see \cite{SW:1965}, the Shapiro-Francia test ($SF$), see \cite{SF:1972}, and the Anderson-Darling test ($AD$), see \cite{AD:1952}. For the implementation of these tests in {\tt R} we refer to the package {\tt nortest} by \cite{GL:2015}. Tests based on the empirical characteristic function are represented by the Baringhaus-Henze-Epps-Pulley-test ($BHEP$), see \cite{BH:1988,EP:1983}. The $BHEP$ test with tuning parameter $\beta>0$ is based on
\begin{align*}
	BHEP
	&= \frac{1}{n} \sum_{j,k=1}^{n} \exp\left( - \frac{\beta^2}{2} \big(Y_{j}-Y_{k}\big)^2 \right) \\
	& ~~~\, - \frac{2}{\sqrt{1+\beta^2}} \sum_{j=1}^{n} \exp\left( - \frac{\beta^2}{2(1+\beta^2)} \, Y_{j}^2 \right) + \frac{n}{\sqrt{1+2\beta^2}}.
\end{align*}
We fixed $\beta = 1$ and took the critical values from \cite{Hen:1990}, but also restated them in Table \ref{emp.val.Alt}.

Furthermore, we include the quantile correlation test of del Barrio-Cuesta-Albertos-M\'{a}tran-Rodr\'{i}guez-Rodr\'{i}guez (BCMR), based on the $L^2$-Wasserstein distance, see \cite{DBCMRR:1999} and \cite{delBarrio2000} section 3.3. The BCMR statistic is given by
\begin{equation*}
	BCMR =
	n \left( 1 - \frac{1}{S_n^2} \left( \sum_{k=1}^n X_{(k)} \int_{\frac{k-1}{n}}^{\frac{k}{n}} \Phi^{-1}(t) \, \mathrm{d}t \right)^2 \right) - \int_{\frac{1}{n+1}}^{\frac {n}{n+1}} \frac{t (1 - t)}{\left( \varphi\left( \Phi^{-1}(t) \right) \right)^2} \, \mathrm{d}t,
\end{equation*}
where $X_{(k)}$ is the $k$-th order statistic of $X_1,\ldots,X_n$, $S_n^2$ is the sample variance and $\Phi^{-1}$ is the quantile function of the standard normal distribution. Simulated critical values can be found in \cite{K:2009}, or in Table \ref{emp.val.Alt}.

The Henze-Jim\'{e}nez-Gamero test, see \cite{HJG:2018}, uses a weighted $L^2$ distance of the empirical moment generating function $M_n(t) = n^{-1} \sum_{j=1}^n\exp(tY_{n,j})$ and the moment generating function $m(t)$ of the standard normal distribution, $t\in\R$. The test is based on
\begin{align*}
	HJG_\beta
	&= n \int_{\mathbb{R}} \big( M_n(t) -m(t) \big)^2 w_\beta(t) \,\mathrm{d}t \\
	&= \frac1{n\sqrt{\beta}}\sum_{j,k=1}^n \exp\left( \frac{(Y_{n,j} + Y_{n,k})^2}{4 \beta} \right) - \frac{2}{\sqrt{\beta-1/2}}\sum_{j=1}^n\exp\left( \frac{Y_{n,j}^2}{4\beta-2}\right)\\&~~~\,+\frac{n}{\sqrt{\beta-1}}.
\end{align*}
with $\beta > 2$ and $w_\beta(t)=\exp(-\beta t^2)$. We considered the tuning parameters $\beta\in\{2.5,5,10\}$. Since in \cite{HJG:2018} no critical values were simulated in the univariate case, the empirical critical values can be found in Table \ref{emp.val.Alt}. This test was proposed recently, so it is not yet found in any other power studies.


All of the simulated critical values have been confirmed in a simulation study with 100~000 replications and can, for the sake of completeness, be found in Table \ref{emp.val.Alt} (compare to \cite{Hen:1990,K:2009}).
\begin{table}[t]
\tiny
	\centering
	\begin{tabular}{c|rrrrr}
		\hline
		$n\backslash \mbox{Test}$ & $BCMR$ &  $BHEP$ & $HJG_{2.5}$ & $HJG_{5}$ & $HJG_{10}$\\
		\hline
		20  & 0.31 & 0.368 & 0.1503 & 0.2568  & 0.3258 \\
		50  & 0.30 & 0.374 & 6.349e-3 & 8.568e-3 & 9.556e-3 \\
		100 & 0.29 & 0.376 & 4.001e-4 & 4.988e-4 & 5.369e-4 \\
		\hline
		
	\end{tabular}
	\caption{Empirical 0.95 quantiles for $BCMR$, $BHEP$ and $HJG_{\beta}$ under $H_0$ (100~000 replications)}\label{emp.val.Alt}
\end{table}

The alternatives were chosen to fit the extensive power study of normality tests by \cite{RDC:2010}, in order to ease the comparison to other tests. Namely, we chose as symmetric distributions the Students $t_\nu$-distribution with $\nu\in \{3,5,10\}$ degrees of freedom, as well as the uniform distribution $\mathcal{U}(-\sqrt{3},\sqrt{3})$. The asymmetric distributions are the $\chi^2_\nu$-distribution with $\nu\in \{5,15\}$ degrees of freedom, the Beta distributions $B(1,4)$ and $B(2,5)$, the Gamma distributions $\Gamma(1,5)$ and $\Gamma(5,1)$, the Gumbel distribution $Gum(1,2)$ with location parameter 1 and scale parameter 2, the lognormal distribution $LN(0,1)$ as well as the Weibull distribution $W(1,0.5)$ with scale parameter 1 and shape parameter 0.5. As representative of bimodal distributions we chose the mixture of normal distributions $Mix\mathcal{N}(p,\mu,\sigma^2)$, where the random variables were generated by
\begin{equation*}
	(1-p)\, \mathcal{N}(0,1) + p \, \mathcal{N}(\mu,\sigma^2), \quad p\in(0,1),\, \mu\in\R, \, \sigma>0.
\end{equation*}

Since we consider two new families of tests that depend on the choice of the tuning parameter $a$, we will demonstrate the finite-sample power for a range of different parameters. In each Monte Carlo simulation we consider the sample sizes $n=20$, $n=50$ and $n=100$ and fix the nominal level of significance $\alpha$ to $0.05$. Throughout, the critical values for $G_{n,a}^{(k)}$, $k=1,2$, are taken from Table \ref{emp.val.G1} and \ref{emp.val.G2}. Each entry in Table \ref{pow.G} and \ref{pow.A} referring to the finite-sample power of the tests is based on $10~000$ replications. The best performing test for each distribution and sample size taken over both tables has been highlighted for easy reference. Starting from the symmetric distributions, we can see that the $SF$, the $AD$ and $SW$ test have the best performances for these models. Interestingly, the $HJG_\beta$ test performs best for the Students $t_{10}$ distribution, but completely fails to detect the uniform alternative.
The finite-sample power of the new tests for symmetric alternatives is comparable to the $BHEP$ test, which in turn is dominated by the $BCMR$ procedure. Considering the asymmetric distributions, the new procedures show their potential by dominating all other procedures for the $\chi^2$-, Gamma- as well as the Gumbel distribution. Moreover, for small sample sizes the asymmetric Beta model is best detected by $G_{n,a}^{(1)}$ but for larger $n$ the $SW$ and $BCMR$ have small advances in power. In general, $G_{n,a}^{(2)}$ performs a little better than $G_{n,a}^{(1)}$ for all asymmetric models, but in direct comparison the latter procedure seems to be somewhat more robust in detecting symmetric alternatives. All procedures do a good job in rejecting the Weibull and the lognormal alternatives.


\begin{table}[t]
	\tiny
	\setlength{\tabcolsep}{.5mm}
	\centering
	\begin{tabular}{cc|ccccccc|ccccccc}
		& & \multicolumn{7}{c|}{$G_{n,a}^{(1)}$} & \multicolumn{7}{c}{$G_{n,a}^{(2)}$} \\
		Alt. & $n\backslash a$ & 0.1 & 0.25 & 0.5 & 1 & 1.5 & 2 & 3 & 0.1 & 0.25 & 0.5 & 1 & 1.5 & 2 & 3  \\
		\hline
		\multirow{3}{*}{$\mathcal{N}(0,1)$} & 20 & 5 & 5 & 5 & 5 & 5 & 5 & 5 & 5 & 5 & 5 & 5 & 5 & 5 & 5  \\
		& 50 & 5 & 5 & 5 & 5 & 5 & 5 & 5 & 5 & 5 & 5 & 5 & 5 & 5 & 5 \\
		& 100 & 5 & 5 & 5 & 5 & 5 & 5 & 5 & 5 & 5 & 5 & 5 & 5 & 5 & 5 \\
		\hline
		\multirow{3}{*}{$Mix\mathcal{N}(0.3,1,0.25)$} & 20 & 25 & 25 & 24 & 24 & 23 & 23 & 23 & 22 & 22 & 21 & 20 & 20 & 19 & 19 \\
  & 50 & 56 & 57 & 57 & 56 & 56 & 55 & 55 & 50 & 50 & 49 & 48 & 47 & 46 & 46 \\
  & 100 & 86 & 87 & 87 & 87 & 86 & 86 & 86 & 80 & 80 & 79 & 78 & 77 & 77 & 76 \\
		
		\multirow{3}{*}{$Mix\mathcal{N}(0.5,1,4)$} & 20 &  34 & 35 & 35 & 36 & 37 & 37 & 37 & 32 & 32 & 33 & 33 & 33 & 33 & 34 \\
  & 50 & 52 & 56 & 60 & 63 & 64 & 64 & 65 & 42 & 43 & 44 & 46 & 47 & 48 & 49 \\
  & 100 & 75 & 84 & 89 & 91 & 92 & 92 & 92 & 55 & 57 & 60 & 66 & 68 & 70 & 71 \\
		
		\hline
		\multirow{3}{*}{$t_3$} & 20 &  30 & 31 & 33 & 34 & 35 & 35 & 35 & 32 & 32 & 33 & 34 & 34 & 35 & 35 \\
  & 50 & 41 & 46 & 50 & 54 & 56 & 57 & 58 & 44 & 45 & 47 & 50 & 52 & 53 & 54 \\
  & 100 & 54 & 63 & 70 & 76 & 78 & 79 & 81 & 53 & 56 & 61 & 67 & 70 & 72 & 74 \\
		
		\multirow{3}{*}{$t_5$} & 20 &  16 & 17 & 18 & 19 & 19 & 19 & 20 & 18 & 18 & 19 & 19 & 19 & 19 & 20 \\
  & 50 & 22 & 25 & 27 & 29 & 31 & 31 & 32 & 26 & 26 & 27 & 29 & 30 & 31 & 31 \\
  & 100 & 27 & 31 & 36 & 41 & 43 & 44 & 46 & 30 & 31 & 34 & 37 & 39 & 41 & 42 \\
		\multirow{3}{*}{$t_{10}$} & 20 &   9 & 9 & 10 & 10 & 10 & 10 & 10 & 10 & 10 & 10 & 10 & 11 & 11 & 11 \\
  & 50 & 11 & 11 & 12 & 13 & 14 & 15 & 15 & 13 & 13 & 13 & 14 & 15 & 15 & 16 \\
  & 100 & 11 & 12 & 14 & 16 & 17 & 17 & 18 & 14 & 14 & 15 & 16 & 17 & 17 & 18 \\
		
		\hline
		\multirow{3}{*}{$\mathcal{U}(-\sqrt{3},\sqrt{3})$} & 20 & 4 & 4 & 3 & 3 & 3 & 3 & 3 & 2 & 2 & 2 & 2 & 1 & 1 & 1 \\
  & 50 & 3 & 4 & 5 & 7 & 7 & 8 & 8 & 2 & 2 & 2 & 2 & 1 & 1 & 1 \\
  & 100 & 5 & 10 & 19 & 32 & 38 & 41 & 45 & 2 & 2 & 3 & 3 & 3 & 3 & 3 \\
		\hline
		\multirow{3}{*}{$\chi^2_5$} & 20 & 44 & {\bf 45} & {\bf 45} & {\bf 45} & 44 & 44 & 44 & 45 & 45 & 44 & 44 & 43 & 43 & 43 \\
  & 50 & 84 & 86 & 87 & 87 & 87 & 87 & 87 & 87 & 87 & 87 & 87 & 87 & 87 & 87 \\
  & 100 & 99 & 99 & 99 & 99 & 99 & 99 & 99 & 99 & 99 & 99 & 99 & 99 & 99 & 99 \\
		
		\multirow{3}{*}{$\chi^2_{15}$} & 20  & 18 & 18 & {\bf 19} & {\bf 19} & {\bf 19} & {\bf 19} & {\bf 19} & {\bf 19} & {\bf 19} & {\bf 19} & {\bf 19} & {\bf 19} & {\bf 19} & {\bf 19} \\
  & 50 & 44 & 45 & 46 & 46 & 46 & 46 & 46 & {\bf 47} &{\bf  47} & {\bf 47} & {\bf 47} & 46 & 46 & 46 \\
  & 100 & 74 & 76 & 77 & 78 & 78 & 78 & 78 & 78 & 78 &{\bf  79} & 78 & 78 & 78 & 78 \\

		\hline
		
		\multirow{3}{*}{$B(1,4)$} & 20 &  49 & 51 & 51 & 51 & 51 & 50 & 50 & 49 & 49 & 49 & 48 & 47 & 47 & 46 \\
  & 50 & 90 & 93 & 94 & 94 & 94 & 94 & 94 & 92 & 92 & 92 & 92 & 92 & 92 & 92 \\
  & 100 & {\bf 100} & {\bf 100} & {\bf 100} & {\bf 100} & {\bf 100} & {\bf 100} & {\bf 100} & {\bf 100} & {\bf 100} & {\bf 100} & {\bf 100} & {\bf 100} & {\bf 100} & {\bf 100} \\

		\multirow{3}{*}{$B(2,5)$} & 20 &  15 & {\bf 16} & 15 & 15 & 15 & 15 & 15 & 15 & 15 & 15 & 15 & 14 & 14 & 14 \\
  & 50 & 40 & 42 & 43 & 44 & 44 & 43 & 43 & 42 & 42 & 42 & 42 & 41 & 41 & 41 \\
  & 100 & 73 & 77 & 79 & 80 & 81 & 81 & 81 & 76 & 77 & 78 & 78 & 78 & 78 & 78 \\
		
		\hline
		
		\multirow{3}{*}{$\Gamma(1,5)$} & 20 &  76 & 78 & 78 & 78 & 78 & 78 & 78 & 77 & 77 & 76 & 76 & 75 & 75 & 74 \\
  & 50 & 99 & {\bf 100} & {\bf 100} & {\bf 100} & {\bf 100} & {\bf 100} & {\bf 100} & 99 & 99 & {\bf 100} & {\bf 100} & {\bf 100} & {\bf 100} & 99 \\
  & 100 & {\bf 100} & {\bf 100} & {\bf 100} & {\bf 100} & {\bf 100} & {\bf 100} & {\bf 100} & {\bf 100} & {\bf 100} & {\bf 100} & {\bf 100} & {\bf 100} & {\bf 100} & {\bf 100} \\

		\multirow{3}{*}{$\Gamma(5,1)$} & 20 & 25 & 25 & 25 & 25 & 25 & 25 & 25 & 26 & 26 & 26 & 25 & 25 & 25 & 25 \\
  & 50 & 58 & 60 & 61 & 61 & 61 & 61 & 61 & {\bf 62} & {\bf 62} & {\bf 62} & {\bf 62} & {\bf 62} & 61 & 61 \\
  & 100 & 88 & 90 & 90 & {\bf 91} & {\bf 91} & {\bf 91} & {\bf 91} & {\bf 91} & {\bf 91} & {\bf 91} & {\bf 91} & {\bf 91} & {\bf 91} & {\bf 91} \\

		\hline
		
		\multirow{3}{*}{$W(1,0.5)$} &20 & 76 & 78 & 78 & 78 & 78 & 78 & 78 & 77 & 77 & 77 & 76 & 75 & 75 & 75 \\
  & 50 & 99 & {\bf 100} & {\bf 100} & {\bf 100} & {\bf 100} & {\bf 100} & {\bf 100} & 99 & 99 & {\bf 100} & {\bf 100} & 99 & 99 & 99 \\
  & 100 & {\bf 100} & {\bf 100} & {\bf 100} & {\bf 100} & {\bf 100} & {\bf 100} & {\bf 100} & {\bf 100} & {\bf 100} & {\bf 100} & {\bf 100} & {\bf 100} & {\bf 100} & {\bf 100} \\

		\multirow{3}{*}{$Gum(1,2)$} & 20 &  32 & 33 & 33 & 33 & 33 & 33 & 33 & {\bf 34} & {\bf 34} & {\bf 34} & 33 & 33 & 33 & 33 \\
  & 50 & 70 & 71 & 72 & 72 & 72 & 72 & 72 & {\bf 73} & {\bf 73} & {\bf 73} & 72 & 72 & 72 & 72 \\
  & 100 & 94 & 95 & 95 & 95 & {\bf 96} & {\bf 96} & {\bf 96} & {\bf 96} & {\bf 96} & {\bf 96} & {\bf 96} & {\bf 96} & {\bf 96} & {\bf 96} \\

		\multirow{3}{*}{$LN(0,1)$} & 20 &  90 & 91 & 91 & 91 & 91 & 91 & 91 & 90 & 90 & 90 & 90 & 90 & 89 & 89 \\
  & 50 & {\bf 100} & {\bf 100} & {\bf 100} & {\bf 100} & {\bf 100} & {\bf 100} & {\bf 100} & {\bf 100} & {\bf 100} & {\bf 100} & {\bf 100} & {\bf 100} & {\bf 100} & {\bf 100} \\
  & 100 & {\bf 100} & {\bf 100} & {\bf 100} & {\bf 100} & {\bf 100} & {\bf 100} & {\bf 100} & {\bf 100} & {\bf 100} & {\bf 100} & {\bf 100} & {\bf 100} & {\bf 100} & {\bf 100} \\

	\end{tabular}
	\caption{Empirical rejection rates for $G_{n,a}^{(j)}$, $j=1,2$ ($\alpha=0.05$, 10~000 replications)}\label{pow.G}
\end{table}

\begin{table}[t]
\tiny
	\setlength{\tabcolsep}{.5mm}
	\centering
	\begin{tabular}{cc|cccccccc}

		Alt. & $n$ & $SW$ & $BCMR$ & $BHEP$ & $AD$ & $SF$ & $HJG_{2.5}$ & $HJG_{5}$ & $HJG_{10}$\\
		\hline
		\multirow{3}{*}{$\mathcal{N}(0,1)$} & 20 & 5 & 5 & 5 & 5 & 5 & 5 & 5 & 5 \\
		& 50 & 5 & 5 & 5 & 5 & 5 & 5 & 5 & 5  \\
		& 100 & 5 & 5 & 5 & 5 & 5 & 5 & 5 & 5 \\
		\hline
		\multirow{3}{*}{$Mix\mathcal{N}(0.3,1,0.25)$} & 20 & 28 & 28 & 27 & {\bf 30} & 25 & 11 & 13 & 14 \\
  & 50 & 60 & 60 & 62 & {\bf 68} & 57 & 16 & 26 & 32 \\
  & 100 & 89 & 89 & 90 & {\bf 94} & 88 & 28 & 49 & 58 \\
		\multirow{3}{*}{$Mix\mathcal{N}(0.5,1,4)$} & 20 & 40 & 43 & 42 & 46 & {\bf 48} & 34 & 33 & 33 \\
  & 50 & 78 & 80 & 80 & {\bf 86} & 83 & 49 & 49 & 46 \\
  & 100 & 97 & 98 & 98 & {\bf 99} & 98 & 69 & 68 & 61 \\
		\hline
		\multirow{3}{*}{$t_3$} & 20 &  35 & 37 & 34 & 33 & {\bf 40} & 38 & 37 & 36 \\
  & 50 & 64 & 65 & 61 & 60 & {\bf 69} & 64 & 62 & 59 \\
  & 100 & 88 & 89 & 86 & 85 & {\bf 91} & 86 & 84 & 78 \\
		\multirow{3}{*}{$t_5$} & 20 & 19 & 20 & 18 & 17 & {\bf 22} & {\bf 22} & {\bf 22} & 21 \\
  & 50 & 35 & 37 & 32 & 31 & {\bf 41} & 40 & 38 & 36 \\
  & 100 & 56 & 58 & 50 & 48 & {\bf 63} & 59 & 55 & 50 \\
		\multirow{3}{*}{$t_{10}$} & 20 & 10 & 11 & 9 & 9 & {\bf 12} & {\bf 12} & {\bf 12} & {\bf 12} \\
  & 50 & 16 & 17 & 13 & 12 & {\bf 20} & {\bf 20} & 19 & 18 \\
  & 100 & 22 & 24 & 16 & 15 & {\bf 28} & {\bf 28} & 26 & 23 \\
		\hline
		\multirow{3}{*}{$\mathcal{U}(-\sqrt{3},\sqrt{3})$} & 20 &  {\bf 21} & 17 & 13 & 17 & 8 & 0 & 0 & 0 \\
  & 50 & {\bf 75} & 70 & 55 & 58 & 47 & 0 & 0 & 0 \\
  & 100 & {\bf 100} & 99 & 95 & 95 & 97 & 0 & 0 & 0 \\
		\hline
		\multirow{3}{*}{$\chi^2_5$} & 20 & 44 & 44 & 42 & 38 & 42 & 33 & 36 & 39 \\
  & 50 & {\bf 89} & 88 & 84 & 80 & 85 & 65 & 76 & 80 \\
  & 100 & {\bf 100} & {\bf 100} & 99 & 99 & {\bf 100} & 91 & 98 & 99 \\

		\multirow{3}{*}{$\chi^2_{15}$} & 20 & 18 & 18 & 17 & 16 & 18 & 16 & 17 & 18 \\
  & 50 & 42 & 42 & 39 & 33 & 40 & 32 & 38 & 41 \\
  & 100 & 75 & 74 & 68 & 61 & 71 & 54 & 68 & 73 \\
		\hline
	
		\multirow{3}{*}{$B(1,4)$} & 20 & {\bf 59} & 58 & 52 & 51 & 53 & 28 & 34 & 38 \\
  & 50 & {\bf 98} & {\bf 98} & 94 & 95 & 97 & 57 & 76 & 83 \\
  & 100 & {\bf 100} & {\bf 100} & {\bf 100} & {\bf 100} & {\bf 100} & 89 & 99 & {\bf 100} \\
		\multirow{3}{*}{$B(2,5)$} & 20 & {\bf 16} & {\bf 16} & {\bf 16} & 14 & 14 & 9 & 11 & 12 \\
  & 50 & 50 & 47 & 45 & 39 & 40 & 16 & 25 & 30 \\
  & 100 & 90 & 89 & 80 & 76 & 82 & 29 & 54 & 64 \\
		\hline
		\multirow{3}{*}{$\Gamma(1,5)$} & 20 &  {\bf 83} & {\bf 83} & 77 & 77 & 80 & 57 & 63 & 67 \\
  &50 & {\bf 100} & {\bf 100} & {\bf 100} & {\bf 100} & {\bf 100} & 91 & 97 & 98 \\
  & 100 & {\bf 100} & {\bf 100} & {\bf 100} & {\bf 100} & {\bf 100} & {\bf 100} & {\bf 100} & {\bf 100} \\
		\multirow{3}{*}{$\Gamma(5,1)$} & 20 & 24 & 24 & 23 & 20 & 24 & 20 & 22 & 23 \\
  & 50 & 59 & 59 & 55 & 49 & 56 & 42 & 51 & 55 \\
  & 100 & 90 & 90 & 85 & 81 & 88 & 69 & 83 & 87 \\
		\hline
		\multirow{3}{*}{$W(1,0.5)$} & 20 &  {\bf 84} & 83 & 78 & 77 & 80 & 58 & 64 & 67 \\
  & 50 & {\bf 100} & {\bf 100} & {\bf 100} & {\bf 100} & {\bf 100} & 91 & 97 & 98 \\
  & 100 & {\bf 100} & {\bf 100} & {\bf 100} & {\bf 100} & {\bf 100} & {\bf 100} & {\bf 100} & {\bf 100} \\
		
		\multirow{3}{*}{$Gum(1,2)$} & 20 & 31 & 32 & 31 & 28 & 32 & 28 & 30 & 31 \\
  & 50 & 69 & 69 & 66 & 60 & 67 & 55 & 64 & 67 \\
  & 100 & 94 & 94 & 91 & 89 & 93 & 83 & 92 & 94 \\
		
		\multirow{3}{*}{$LN(0,1)$} & 20 &  {\bf 93} & {\bf 93} & 91 & 90 & 91 & 78 & 83 & 85 \\
  & 50 & {\bf 100} & {\bf 100} & {\bf 100} & {\bf 100} & {\bf 100} & 99 & {\bf 100} & {\bf 100} \\
  & 100 & {\bf 100} & {\bf 100} & {\bf 100} & {\bf 100} & {\bf 100} & {\bf 100} & {\bf 100} & {\bf 100} \\
	\end{tabular}
	\caption{Empirical rejection rates for competing procedures ($\alpha=0.05$, 10~000 replications)}\label{pow.A}
\end{table}

\section{Conclusions and outlines}
\label{Section conclusions}
Starting with Charles Stein's insight that a random variable $X$ has a unit normal distribution if, and only if,
\begin{equation*}
	\E \big[ f^{\prime}(X) - Xf(X) \big] = 0
\end{equation*}
holds for any absolutely continuous function, we developed two classes of goodness-of-fit statistics for testing the normality hypothesis. To that end, we utilized the zero-bias transformation to bypass the problem of calculating an empirical property for all absolutely continuous function. An advantage of the underlying zero-bias transformation over many other types of transformation applied in goodness-of-fit testing, like the characteristic function or the Laplace transform, is that the distribution inserted into the mapping is not associated with a purely analytic quantity but is mapped to another distribution and, thereby, stays accessible to a stochastically intuitive examination (cf. Proposition $\ref{Prop existence zero bias}$ and Lemma $\ref{prelim lemma}$). The conducted power study suggests that our tests are serious competitors to established tests and, in a noteworthy number of asymmetrical alternatives, even set new markers in terms of the highest power achieved. Additionally, the statistics possess the most important asymptotic properties desirable for a hypothesis test, namely, they are consistent against general alternatives, have a limiting normal distribution under fixed alternatives and are able to detect contiguous alternatives.
We want to emphasize that some questions remain open for further research. An interesting question is whether there are some limiting statistics as $a \to \infty$ for the families of statistics $\left(G_{n,a}^{(j)}\right)$, $j=1,2$, with weight function $\omega_a$. In view of the simulation results of section \ref{Section emp results}, finding a (possibly data dependent) best choice of tuning parameter $a$ would be a nice result, but for most models the power performance seems to be rather stable. An exception are the symmetric distributions where the influence seems to be significant, cf. the results for the uniform distribution. Perhaps a data-dependent choice of the tuning parameter as proposed by \cite{AS:2015} can give better results (for a discussion of this method for exponentiality tests with tuning parameters, see section 3 of \cite{ASSV:2017}).   Due to the problems with the classical, direct approach (see the Remark in section $\ref{Section under fixed alternatives}$) we have not stated consistent estimators for $\tau_{(1)}^2$ and $\tau_{(2)}^2$, which are crucial to find asymptotic confidence intervals as in (\ref{confidence interval}).

\bibliography{lit-BibTexNV_neu}   
\bibliographystyle{abbrv}      
S. Betsch and B. Ebner, Institute of Stochastics, Karlsruhe Institute of Technology (KIT), Englerstr. 2, D-76133 Karlsruhe:
\\
{\texttt Steffen.Betsch@student.kit.edu} \ \ {\texttt Bruno.Ebner@kit.edu}

\end{document}